\documentclass[11pt]{article}

\usepackage{tikz}
\usepackage{graphicx}
\usepackage{verbatim}
\usepackage{amssymb,amsfonts,amsmath,amsthm}
\usepackage{url}
\usepackage{fullpage}
\usepackage[citecolor=red]{hyperref}
\usepackage[capitalise]{cleveref}

\newcommand{\ip}[1]{\langle #1 \rangle}
\newcommand{\eqdef}{\stackrel{\rm{def}}{=}}

\newcommand{\floor}[1]{{\lfloor#1\rfloor}}

\newcommand{\dist}{{\sf dist}}
\newcommand{\tvdist}{{\sf dist_{TV}}}
\newcommand{\weight}{{\sf weight}}
\newcommand{\A}{{\mathcal A}}
\newcommand{\C}{{\mathcal C}}

\newcommand{\N}{{\mathbb N}}
\newcommand{\R}{{\mathbb R}}

\newcommand{\eps}{\epsilon}
\newcommand{\seq}{\subseteq}

\newcommand{\half}{\frac{1}{2}}

\renewcommand{\int}{{\sf int}}

\newcommand{\D}{{\mathcal D}}

\newcommand{\nmExt}{{\sf nmExt}}
\newcommand{\Ext}{{\sf Ext}}

\newcommand{\supp}{{\sf supp}}

\newcommand{\zo}{\{0,1\}}
\newcommand{\cc}[1][n]{\{0,1\}^{#1}}

\newtheorem{ftheorem}{Theorem}
\newtheorem{flemma}[ftheorem]{Lemma}
\newtheorem{theorem}{Theorem}[section]

\newtheorem{proposition}[theorem]{Proposition}
\newtheorem{definition}[theorem]{Definition}
\newtheorem{lemma}[theorem]{Lemma}
\newtheorem{claim}[theorem]{Claim}

\newenvironment{proof-sketch}{\noindent{\bf Sketch of Proof}\hspace*{1em}}{\qed\bigskip}
\newenvironment{proof-idea}{\noindent{\bf Proof Idea}\hspace*{1em}}{\qed\bigskip}
\newenvironment{proof-of-lemma}[1]{\noindent{\bf Proof of Lemma #1}\hspace*{1em}}{\qed\bigskip}
\newenvironment{proof-of-claim}[1]{\noindent{\bf Proof of Claim #1}\hspace*{1em}}{\qed\bigskip}
\newenvironment{proof-of-thm}[1]{\noindent{\bf Proof of Theorem #1}\hspace*{1em}}{\qed\bigskip}
\newenvironment{proof-attempt}{\noindent{\bf Proof Attempt.}\hspace*{1em}}{\qed\bigskip}

\newenvironment{remark}{\noindent{\it Remark.}}{\bigskip}

\renewcommand{\epsilon}{\varepsilon}

\newcommand*\samethanks[1][\value{footnote}]{\footnotemark[#1]}

\title{An Entropy Lower Bound for Non-Malleable Extractors}

\author{
    Tom Gur
\thanks{{
Department of Electrical Engineering and Computer Sciences, UC Berkeley.\newline
Emails:~\texttt{tom.gur@berkeley.edu,~igors@berkeley.edu}\newline
This work was supported in part by the UC Berkeley Center for Long-Term Cybersecurity.}
}
\and
	Igor Shinkar
\samethanks[1]
}

\begin{document}

\maketitle

\begin{abstract}
A $(k,\eps)$-non-malleable extractor
is a function $\nmExt : \{0,1\}^n \times \{0,1\}^d \to \{0,1\}$
that takes two inputs, a weak source $X \sim \{0,1\}^n$ of min-entropy $k$
and an independent uniform seed $s \in \{0,1\}^d$, and
outputs a bit $\nmExt(X, s)$ that is $\eps$-close to uniform,
even given the seed $s$ and the value $\nmExt(X, s')$
for an adversarially chosen seed $s' \neq s$.
Dodis and Wichs~(STOC 2009) showed the existence of
$(k, \eps)$-non-malleable extractors with
seed length $d = \log(n-k-1) + 2\log(1/\eps) + 6$
that support sources of entropy $k > \log(d) + 2 \log(1/\eps) + 8$.

We show that the foregoing bound is essentially tight,
by proving that any $(k,\eps)$-non-malleable extractor
must satisfy the entropy bound
$k > \log(d) + 2 \log(1/\eps) - \log\log(1/\eps) - C$ 
for an absolute constant $C$.
In particular, this implies that non-malleable extractors
require min-entropy at least $\Omega(\log\log(n))$.
This is in stark contrast to the existence of strong seeded extractors
that support sources of entropy $k = O(\log(1/\eps))$.

Our techniques strongly rely on coding theory.
In particular, we reveal an inherent connection between non-malleable extractors
and error correcting codes, by proving a new lemma which shows that
any $(k,\eps)$-non-malleable extractor with seed length $d$
induces a code $\C \seq \{0,1\}^{2^k}$
with relative distance $0.5 - 2\eps$ and rate $\frac{d-1}{2^k}$.
\end{abstract}

\section{Introduction}\label{sec:intro}

Randomness extractors are central objects in the theory of computation.
Loosely speaking, a \emph{seeded extractor}~\cite{NZ96} is a randomized algorithm
that extracts nearly uniform bits from biased random sources,
using a short seed of randomness.
A \emph{non-malleable extractor}~\cite{DW09} is a seeded extractor
that satisfies a very strong requirement
regarding the lack of correlations of the output of the extractor
with respect to different seeds.

More accurately, a $(k,\eps)$-non-malleable extractor is a
function $\nmExt : \{0,1\}^n \times \{0,1\}^d \to \{0,1\}$
such that for every (weak) source $X$ of min-entropy $k$
and a random variable $s$ uniformly distributed on $\{0,1\}^d$
it holds that $\nmExt(X,s)$ is $\eps$-close to uniform,
even given the seed $s \in \{0,1\}^d$ and the value $\nmExt(X, s')$
for any seed $s' \neq s$ that is determined as an arbitrary function of $s$.
More generally, if $\nmExt(X,s)$ is $\eps$-close to uniform, even given
$\nmExt(X, s'_1), \ldots, \nmExt(X, s'_t)$ for
$t$ adversarially chosen seeds such that $s'_i \neq s$ for all $i\in[t]$,
we say it is a $(k,\eps)$-$t$-non-malleable extractor \cite{CRS14}.

The notion of non-malleable extractors is strongly motivated by
applications to privacy amplification protocols,
as well as proven to be a fundamental notion in the theory of pseudorandomness,
as has been recently exemplified by the key role it played
in the breakthrough construction of explicit two-source extractors
by Chattopadhyay and Zuckerman~\cite{CZ16}.
Moreover, it also has an important connection to Ramsey theory \cite{BKSSW05}.

Non-malleable extractors can be thought of as a strengthening of the notion of
\emph{strong seeded extractors}.
These are functions
$\Ext : \{0,1\}^n \times \{0,1\}^d \to \{0,1\}$
such that for a weak source $X$ and seed $s$
it holds that $\Ext(X,s)$ is $\eps$-close to uniform,
\emph{even given the seed $s \in \{0,1\}^d$}.
We stress that this is a much weaker guarantee than that of non-malleable extractors.
In particular, there exist a blackbox transformation of seeded extractors
into strong seeded extractors with roughly the same parameters \cite{RSW06},
whereas no such transformation is known for non-malleable extractors.

By a simple probabilistic argument (see, e.g.,  \cite{Vadhan12}), there exists a
(strong) seeded extractor $\Ext : \{0,1\}^n \times \{0,1\}^d \to \{0,1\}$
for sources of seed length $d = \log(n) + 2\log(1/\eps) + O(1)$
and min-entropy $k = 2 \log(1/\eps) + O(1)$.
Moreover, by a long line of research, starting with the seminal work of
Nisan and Zuckerman~\cite{NZ96}, and culminating with \cite{GUV09,DKSS13,TU12}
we now know of \emph{explicit} constructions that nearly achieve the optimal parameters.

For non-malleable extractors
the parameters achievable by current constructions are weaker.
Dodis and Wichs showed the existence of $(k, \eps)$-non-malleable extractors
with seed length $d = \log(n-k-1) + 2\log(1/\eps) + 6$,
and entropy $k > \log(d) + 2 \log(1/\eps) + 8$;
and in particular, for $k \geq \log\log(n) + 2 \log(1/\eps)$.
The best explicit construction, due to  \cite{Coh17}
achieve seed length $d = O(\log n) + \tilde{O}(\log(1/\eps))$
for entropy $k = \Omega(d)$.

Note that while for (strong) seeded extractors
there are constructions that support sources of entropy $k = 2 \log(1/\eps) + O(1)$,
without any dependence on $n$,
all known constructions of non-malleable extractors
require the entropy of the source to be at least doubly-logarithmic in $n$.
This naturally raises the question of
whether the dependence on $n$ is indeed necessary for non-malleable extractors.

\begin{center}
    \fbox{\begin{minipage}{0.95\textwidth}
	Question: Is it true that in any $(k, \eps)$-non-malleable extractor
    the entropy $k$ must grow with~$n$?
	\end{minipage}}
\end{center}

In this paper we give a positive answer to this question,
as well as reveal a simple yet fundamental connection
between non-malleable extractors and error-correcting codes,
which we believe to be of independent interest.

\subsection{Our results}
Our main result is a lower bound on the entropy required by non-malleable extractors,
which essentially matches the one obtained by the probabilistic construction.
In particular, we show that any $(k, \eps)$-non-malleable extractor requires
the source entropy $k$ to be at least $\log\log(n) - (2-o_\eps(1)) \log(1/\eps)$.
In fact, we prove the entropy lower bound for the more general notion
of $t$-non-malleable extractors.

\begin{ftheorem}[Main result]\label{thm:t-nmExt lower bound}
    Let $n,k,d,t \in \N$ be parameters such that $t \leq 2^{d/2}$,
    and let $\eps \in (0,c_0)$ for some absolute constant $c_0$.
	If $\nmExt : \{0,1\}^n \times \{0,1\}^d \to \{0,1\}$
	is a $(k,\eps)$-$t$-non-malleable extractor, then
    $d > \log(n-k) + 2\log(1/\eps) - C$ and
	$k \geq \log(d) + 2 \log(1/\eps) - \log\log(1/\eps) + \log(t) - C$
	for an absolute constant $C$.
\end{ftheorem}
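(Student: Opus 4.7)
The first inequality $d > \log(n-k) + 2\log(1/\eps) - C$ should follow from existing machinery: a $(k,\eps)$-$t$-non-malleable extractor is, by marginalizing away the adversary's outputs, a $(k,\eps)$-strong seeded extractor, and the Radhakrishnan--Ta-Shma lower bound for strong seeded extractors gives the claim. The real content is the second inequality, and my plan follows the coding-theoretic route hinted at in the abstract: realize $\nmExt$ as a large binary code of relative distance close to $\tfrac12$, then bound its size via the MRRW (linear programming) bound.

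\textbf{Setup.} By the standard convex combination reduction I may assume the source $X$ is uniform on a set $T \subseteq \{0,1\}^n$ of size $2^k$. For each seed $s$, define the codeword $c_s \in \{0,1\}^{T}$ by $c_s(x) := \nmExt(x,s)$, and write
\[
Z_{s,s'} \eqdef \Pr_X[\nmExt(X,s)=\nmExt(X,s')] - \tfrac{1}{2},
\]
so that $|Z_{s,s'}|$ is exactly the deviation of the relative Hamming distance $\delta(c_s,c_{s'})$ from $\tfrac12$. A direct computation (made easy by the single-bit output) establishes that for any fixed-point-free function $\mathcal A : \{0,1\}^d \to \{0,1\}^d$, the TV-distance guarantee of non-malleability implies
\[
\sum_{s \in \{0,1\}^d} |Z_{s, \mathcal A(s)}| \;\le\; \eps \cdot 2^d.
\]

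\textbf{The coding lemma.} I extract a large sub-code by an adversary-targeting argument. Call $s$ \emph{bad} if some $s' \neq s$ has $|Z_{s,s'}| > 2\eps$, and \emph{good} otherwise. If the set of bad $s$ had size at least $2^{d-1}$, pick for each bad $s$ a witness $\mathcal A(s)$ with $|Z_{s,\mathcal A(s)}| > 2\eps$, extend $\mathcal A$ to good seeds in any fixed-point-free way, and conclude $\sum_s|Z_{s,\mathcal A(s)}| > 2\eps \cdot 2^{d-1} = \eps \cdot 2^d$, contradicting the displayed bound. Hence the set $S$ of good seeds has $|S| > 2^{d-1}$, and
\[
\mathcal C \eqdef \{c_s : s \in S\} \;\subseteq\; \{0,1\}^{2^k}
\]
is a binary code of block length $2^k$, size more than $2^{d-1}$, and minimum relative distance at least $\tfrac12-2\eps$; distinctness of codewords is automatic, since coincident codewords would give $|Z|=\tfrac12 > 2\eps$.

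\textbf{Finish and the $t$-extension.} Applying the (second) MRRW bound to $\mathcal C$ yields $\log|\mathcal C|/2^k \le H_2\bigl(\tfrac12-\sqrt{\delta(1-\delta)}\bigr)$ with $\delta = \tfrac12-2\eps$, and a short Taylor expansion gives $\tfrac12-\sqrt{\delta(1-\delta)} = O(\eps^2)$, so that $\log|\mathcal C| \le O(\eps^2 \log(1/\eps)) \cdot 2^k$. Combining with $|\mathcal C| > 2^{d-1}$ and taking logs rearranges exactly to $k \ge \log d + 2\log(1/\eps) - \log\log(1/\eps) - O(1)$, matching the theorem for $t=1$. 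For the $t$-non-malleable case the plan is to rerun the badness argument using $t$ simultaneous FPF adversaries $\mathcal A_1,\ldots,\mathcal A_t$, each pointing to a distinct bad neighbor of its input, and exploit the \emph{joint} closeness-to-uniform (given all $t$ adversarial outputs, not just each one marginally) to sharpen either the sub-code size or the effective relative distance by a factor of $\sqrt t$, which propagates through the MRRW step to the promised $+\log t$. I expect this last step to be the main obstacle: extracting the full $\log t$ from joint (rather than marginal) non-malleability requires a careful conditional-bias argument, and a non-asymptotic form of the MRRW bound must be invoked so that the final inequality is an honest finite statement rather than an asymptotic one.
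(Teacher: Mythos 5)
Your first inequality and your $t=1$ coding lemma are both correct, and in fact your route to the code is a slightly cleaner packaging of the paper's argument. The paper's proof of \cref{lemma:nmExt-to-code-intro} runs an iterative pair-removal: assume every half-density seed set contains a close pair $(s,s')$, peel off $2^{d}/4$ disjoint such pairs, and build one fixed-point-free adversary $\A$ matching them, against which the distinguisher $\mathbf 1_{b=b'}$ (on bad seeds, else coin flip) has advantage $\ge\eps$. Your dichotomy (``$s$ is bad iff it has some $2\eps$-biased neighbor'') plus the summed-bias inequality $\sum_s|Z_{s,\A(s)}|\le\eps\cdot 2^d$ reaches the same set $S$ of good seeds in one shot, and the sign-aware distinguisher you implicitly use to prove that inequality is legitimate. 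So this part is not a different approach, just a tidier one.

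The two places you flag as potential obstacles are exactly the two places where the paper does nontrivial work, and both need to be filled. First, the asymptotic MRRW bound cannot be applied as you wrote it: the $o(1)$ error is a function of the block length $n=2^k$, and here $2^k$ is only assumed to exceed roughly $1/\eps^2$ (via \cref{thm:ext-LB}), so the $o(1)$ term is not negligible relative to the target rate $O(\eps^2\log(1/\eps))$. The paper therefore proves a quantitative substitute (\cref{thm:MRRW-quant}), via the Navon--Samorodnitsky spectral method: taking $B$ to be the two Hamming levels $\{r,r+1\}$ with $r\approx\Theta(\eps^2)n$ and a two-valued test function supported on $B$ gives $\lambda_B\ge 3\eps n$, and then \cref{thm:NS rate bound} yields $|\C|\le|B|/\eps<2^{O(\eps^2\log(1/\eps))n}$ for all $n>c/\eps^2$, with explicit constants. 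You need some such finite statement; the Taylor expansion alone does not produce it.

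Second, the $t$-extension. Your proposal to ``sharpen the sub-code size or the effective relative distance by a factor of $\sqrt t$'' is not the right move and I don't see how to make it yield $+\log t$; the relative distance stays pinned at $1/2-2\eps$ regardless of $t$. What the paper does instead (\cref{lemma:t-nmExt-to-t-code}) is enlarge the code rather than sharpen the distance: it shows there is a set $S$ of $2^{d-1}$ seeds such that \emph{every} $\mathsf{GF}(2)$-linear combination of at most $t+1$ evaluation vectors $w^{(s)}$, $s\in S$, has relative weight $\ge 1/2-2\eps$, and then takes $\C$ to be all $\mathsf{GF}(2)$-sums of at most $\lfloor t/2\rfloor$ of them. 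Any two such codewords differ by a sum over a symmetric difference of size $\le t$, so the distance guarantee transfers, while $|\C|\ge\binom{2^{d-1}}{\lfloor t/2\rfloor}\ge(2^{d-1}/t)^{\lfloor t/2\rfloor}$. The proof of the weight guarantee is where joint (not marginal) non-malleability is used: given a low-weight $(t'+1)$-tuple $I_j$ one defines $t$ cyclic adversaries $\A_\ell(s_i)=s_{i+\ell\bmod t'_j}$ and a distinguisher that checks whether $b=\sum_i b_i$ over the relevant shifts; this is genuinely a conditional-parity test, so it exercises the full $t$-tuple. Plugging into the quantitative MRRW bound gives $(d-1-\log t)\lfloor t/2\rfloor\le O(\eps^2\log(1/\eps))2^k$, and with $\log t\le d/2$ this rearranges to $k\ge\log d+\log t+2\log(1/\eps)-\log\log(1/\eps)-O(1)$. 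So the extra $\log t$ comes from the $\lfloor t/2\rfloor$-fold blowup in $\log|\C|$, not from a better distance.
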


We remark that by a recent result of Ben-Aroya et~al.~\cite{BCDLS}
(see \cref{thm:BCDLS}),
the lower bound on $d$ in the theorem is tight up to
an additive factor of $O(\log(t))$, and
our lower bound on $k$ is almost tight in $\eps$,
up to an additive factor of $\log\log(1/\eps)$.
Furthermore, since as we mentioned above, there exist
(strong) seeded extractors for sources of entropy $k = 2 \log(1/\eps) + O(1)$,
\cref{thm:t-nmExt lower bound} implies a chasm between
non-malleable extractors and (strong) seeded extractors;
in particular, it rules out the possibility of transforming seeded extractors
into non-malleable extractors, while preserving the parameters.

A key technical tool that we use to prove \cref{thm:t-nmExt lower bound}
is a lemma, which shows that
any non-malleable extractor induces an error correcting code with a good distance.
We believe this lemma is of independent interest.

\begin{flemma}\label{lemma:nmExt-to-code-intro}
	If there exists a $(k,\eps)$-non-malleable extractor
	$\nmExt : \{0,1\}^n \times \{0,1\}^d \to \{0,1\}$,
    then there exists an error correcting code $\C \seq \{0,1\}^{2^k}$
    with relative distance $0.5 - 2\eps$ and rate $\frac{d-1}{2^k}$.
\end{flemma}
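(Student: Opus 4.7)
The plan is to construct the code directly from the truth table of $\nmExt$. I fix an arbitrary subset $T \seq \cc$ of size $2^k$ and let $X$ denote the uniform distribution on $T$ (so $X$ has min-entropy exactly $k$); for each seed $s \in \cc[d]$, I define a candidate codeword $c_s \in \cc[2^k]$ by $c_s(x) := \nmExt(x, s)$ for each $x \in T$ (with $T$ ordered arbitrarily to index the coordinates of $c_s$). This yields $2^d$ candidate codewords of the required block length.

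Next, I use the non-malleability guarantee to extract a pairwise-distance bound. For any fixed-point-free function $f : \cc[d] \to \cc[d]$, instantiating the distinguisher $\mathbf{1}[y = w]$ in
$$\tvdist\bigl((\nmExt(X, S), S, \nmExt(X, f(S))),\; (U, S, \nmExt(X, f(S)))\bigr) \leq \eps$$
yields $\E_s \bigl|\Pr_X[\nmExt(X, s) = \nmExt(X, f(s))] - \tfrac{1}{2}\bigr| \leq \eps$, i.e., the relative Hamming distance $\Delta(c_s, c_{f(s)})$ is close to $\tfrac{1}{2}$ on average over $s$. Varying $f$ over the XOR-shifts $f_\delta(s) = s \oplus \delta$ for $\delta \neq 0$ converts this into the averaged-over-pairs bound $\E_{s \neq s'} \bigl|\Delta(c_s, c_{s'}) - \tfrac{1}{2}\bigr| \leq \eps$, and applying non-malleability with the ``constant-to-target'' function $f_{s^*}(s) = s^*$ (for $s \neq s^*$, with $f(s^*)$ chosen arbitrarily to keep $f$ fixed-point-free) yields the per-target bound $\sum_{s \neq s^*} \bigl|\Delta(c_s, c_{s^*}) - \tfrac{1}{2}\bigr| \leq \eps \cdot 2^d$.

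The final step is to prune the candidates to a subset $\C$ of size $2^{d-1}$ all of whose pairwise distances are at least $\tfrac{1}{2} - 2\eps$. By Markov applied to the per-target bound, each seed has at most $2^{d-1}$ ``bad neighbours'' in the graph $G$ of pairs at relative distance less than $\tfrac{1}{2} - 2\eps$, and by the averaged bound at most half of all ordered pairs are edges of $G$. A combinatorial selection (e.g., iteratively building a good subset, adding a seed only when it is at the required distance from every previously selected seed) then yields the desired code with rate $\tfrac{d-1}{2^k}$ and relative distance $\tfrac{1}{2} - 2\eps$.

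The main obstacle I anticipate is precisely this pruning step. Non-malleability delivers only averaged statistical bounds, while the code requires a worst-case pairwise guarantee, and naive Markov-based bounds on the degree of the ``bad graph'' are not by themselves strong enough to produce an independent set of size $2^{d-1}$. The factor $d-1$ (rather than $d$) in the rate precisely reflects the cost of discarding roughly half the candidates, and handling this step cleanly will likely require either a more carefully tailored choice of fixed-point-free function than the generic averaging above, or a finer structural property of the ``bad graph'' that goes beyond max-degree considerations.
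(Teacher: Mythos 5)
Your construction of the candidate codewords $c_s$ as evaluation vectors of $\nmExt$ on a flat source matches the paper's construction (there denoted $w^{(s)}$). Your intermediate averaged and per-target distance bounds are essentially correct (though for the per-$f$ bound $\E_s\bigl|\Delta(c_s,c_{f(s)})-\tfrac12\bigr|\le\eps$ you need a sign-aware distinguisher that tests $\mathbf{1}[b=b']$ on seeds where $\Delta<\tfrac12$ and $\mathbf{1}[b\ne b']$ on the rest; the plain $\mathbf{1}[y=w]$ only controls the expectation without the absolute value). But, as you correctly anticipate, these statements do not get you to a size-$2^{d-1}$ subset of pairwise $(\tfrac12-2\eps)$-far vectors: a max degree of $2^{d-1}$ in the ``bad graph'' $G$ and an average edge density of $\tfrac12$ are both compatible with $G$ having no independent set of size close to $2^{d-1}$ (a random graph on $2^d$ vertices with edge probability $\tfrac12$ meets both bounds and has independent sets only of size $O(d)$). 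So the pruning step is a genuine gap, not merely a technicality.

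The missing idea is to \emph{argue by contradiction and let the adversary depend on the bad graph}, rather than averaging over generic adversaries. Suppose no subset of $2^{d-1}$ seeds has all pairwise distances at least $\tfrac12-2\eps$; then every set of at least $2^{d-1}$ seeds contains a close pair. Greedily extract disjoint close pairs: at step $j$ you have removed $2(j-1)$ seeds, and as long as $D-2(j-1)\ge D/2$ (i.e.\ $j\le D/4$, where $D=2^d$) a close pair remains. This yields $D/4$ disjoint pairs forming a partial matching $M$ on a set $B$ of $D/2$ ``bad'' seeds. Now define the adversarial function $\A$ to be the involution swapping each matched pair (and arbitrary but fixed-point-free off $B$), and let $\D(b,b',s)$ output $\mathbf{1}[b=b']$ if $s\in B$ and a fresh coin-flip otherwise. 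Since $\Pr_x[\nmExt(x,s)=\nmExt(x,\A(s))]>\tfrac12+2\eps$ for every $s\in B$ and $\Pr[U_d\in B]=\tfrac12$, the distinguisher achieves advantage strictly greater than $\tfrac12\cdot 2\eps=\eps$, contradicting non-malleability. This is precisely the ``more carefully tailored choice of fixed-point-free function'' you speculated might be needed: you are free to choose $\A$ \emph{after} seeing the structure of $G$, and a matching of bad edges is the right choice.
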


In fact, we actually prove a more general lemma, which shows that
$t$-non-malleable extractors induce codes with rate that grows with $t$.
See \cref{sec:proof} for details.

\subsection{Technical overview}\label{sec:tech}
We provide a high-level overview of the proof of our main result,
the entropy lower bound in \cref{thm:t-nmExt lower bound},
for the simple case of $t=1$ (i.e., for standard non-malleable extractors).
See \cref{sec:proof} for the complete details of the proof
for the general case.
We assume basic familiarity with coding theory and extractors
(see \cref{sec:prelims} for the necessary preliminaries).

Consider a non-malleable extractor $\nmExt$.
Our strategy for showing a lower bound on the source entropy of $\nmExt$
consists of the following two steps.
\begin{enumerate}
	\item Derive a binary code $\C$ with high distance and rate from $\nmExt$, as captured by \cref{lemma:nmExt-to-code-intro}.
	\item Show refined bounds on the rate of binary codes with a given minimum distance, and apply them to $\C$ to obtain an entropy lower bound.
\end{enumerate}
That is, we show that if the parameters of $\nmExt$ were too good,
then the implied code $\C$ would have parameters
that would violate the rate bounds in the second step.
Below, we elaborate on each of the steps.

\paragraph{Deriving codes from non-malleable extractors.}
We start with a $(k,\eps)$-non-malleable extractor
$\nmExt : \{0,1\}^n \times \{0,1\}^d \to \{0,1\}$.
Denote $K = 2^k$, and consider a (flat) source $X$,
which we view as a collection of $K$ vectors $X \seq \{0,1\}^n$.
We show that there is a large subset $S$ of the seeds
such that the evaluations of $\nmExt$, with respect to $X$ and $S$,
constitute a code with high distance and rate.

More accurately, denote by $w^{(s)}$ the \emph{evaluation vector} of $\nmExt$
on the source $X$ and seed $s \in \{0,1\}^d$;
that is, $w^{(s)} = (\nmExt(x,s))_{x \in X}$.
We show that there exists a large subset of seeds $S \subseteq \{0,1\}^d$ such that
\begin{equation*}
	\C \eqdef \{w^{(s)} : s \in S\}
\end{equation*}
is a code with distance $0.5-2\eps$ and rate $(d-1)/K$.

As a warmup, it is instructive to note that
the definition of (standard) \emph{seeded extractors} only requires that
a random coordinate of a random $w^{(s)}$ is nearly uniformly distributed.
\emph{Strong seeded extractors} also imply that most evaluation vectors are roughly
balanced (i.e., contain a similar number of zeros and ones),\footnote{
We stress that elements of a set of nearly-balanced vectors
are not necessarily pairwise-far, unless this set is a \emph{linear space}.
Hence, the foregoing property of strong seeded extractors
does \emph{not} imply a good code in general.}
as a strong seeded extractor needs to output a nearly uniform bit,
even given the seed (i.e., even when the identity of $w^{(s)}$ is known).

The key observation is that the structure of \emph{non-malleable extractors}
asserts that there exists a large subset of seeds
whose corresponding evaluation vectors
are (close to) \emph{pairwise uncorrelated}, and hence
constitute a code with large distance.
Details follow.

Denote the number of seeds by $D = 2^d$.
We wish to show that there exists a subset $S \subset \{0,1\}^d$ of $D/2$ seeds
whose corresponding evaluation vectors are pairwise $(0.5 - 2\eps)$-far.
Suppose the contrary, i.e.,
that \emph{every} set $S$ of $D/2$ seeds contains at least two distinct seeds $s,s'$
such that $w^{(s)}$ is $(0.5 - 2\eps)$-close to $w^{(s')}$.
This means that we can iteratively select a set of $D/2$  ``bad'' seeds
$B \eqdef \{ s_1,\ldots,s_{D/4},s'_1,\ldots,s'_{D/4} \}$ such that
$w^{(s_i)}$ and $w^{(s'_i)}$ are
$(0.5 - 2\eps)$-close in Hamming distance, for every $i\in[D/4]$.
(See \cref{fig:ext_code}.)

\begin{figure}
	\centering
	\includegraphics[scale=0.4]{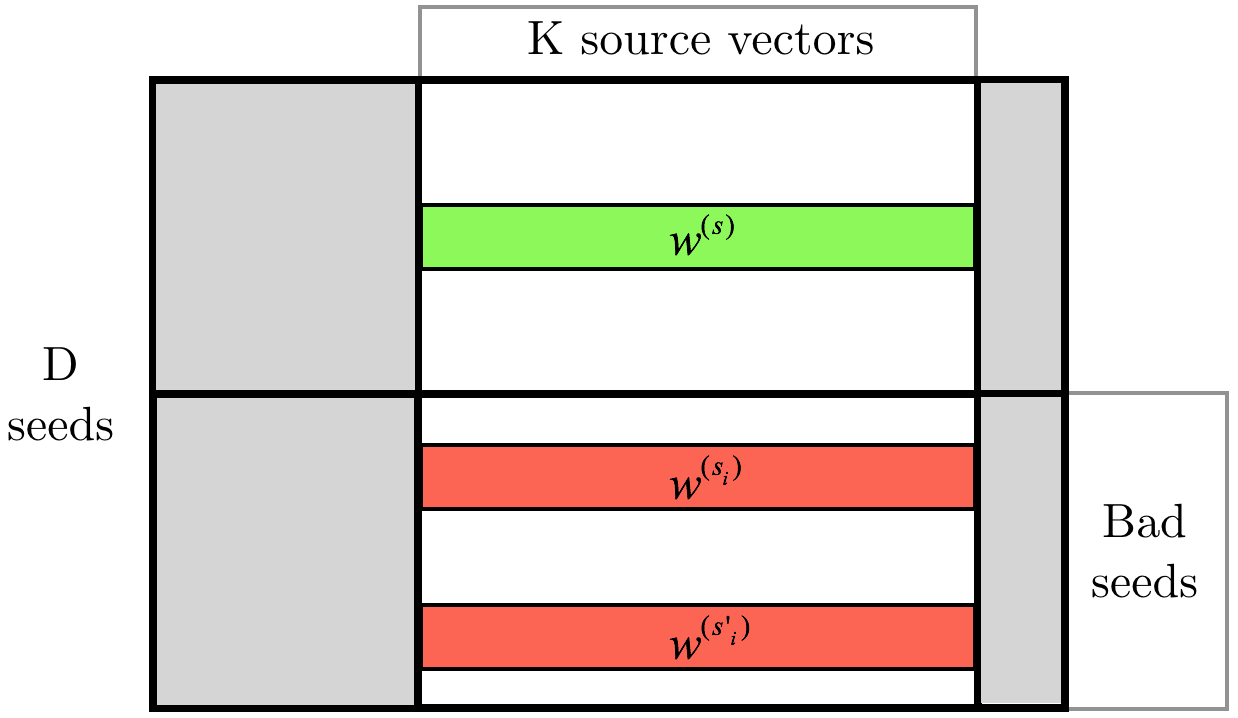}
	\caption{Truth table of a $(k,\eps)$-non-malleable extractor	
	$\nmExt : \{0,1\}^n \times \{0,1\}^d \to \{0,1\}$.
	Rows correspond to the $D = 2^d$ seeds.
	Columns correspond to all $n$-bit vectors,
	out of which we highlight the $K = 2^k$ vectors of the flat source $X$.
	Each vector $w^{(s)} = (\nmExt(x,s))_{x \in X}$ consists of the values
	corresponding to seed $s$ and all vectors of $X$.
	The vectors $w^{(s_i)}$ and $w^{(s'_i)}$
	correspond to a pair of ``bad'' seeds $s_i,s'_i \in B$,
	and hence they are close to each other.}
    \label{fig:ext_code}
\end{figure}

The crux is that having many pairs of correlated evaluation vectors
violates the assumption that $\nmExt$ is a non-malleable extractor.
Intuitively, this holds because for each $w^{(s_i)}$
corresponding to a bad seed $s_i \in B$,
the output of $\nmExt(X,s_i)$ is biased given $\nmExt(X,s'_i)$.
Hence, a non-malleable extractor cannot have a large set of bad seeds.

In \cref{sec:proof-specific} we make this intuition precise
by exhibiting an adversarial function $\A \colon \{0,1\}^d \to \{0,1\}^d$
(with no fixed points) that matches pairs of bad seeds such that
we can construct a distinguisher that,
for a random variable $U_d$ uniformly distributed on the seeds $\{0,1\}^d$,
can tell apart with confidence $\eps$
between $\nmExt(X,U_d)$ and a uniform bit, even when given $\nmExt(X,\A(U_d))$ and~$U_d$.

\paragraph{Refined rate bounds for binary codes.}
After we derived a binary code $\C$ with distance $0.5-2\eps$ and rate $(d-1)/K$
from a $(k,\eps)$-non-malleable extractor $\nmExt$,
we wish to apply upper bounds on the rate of binary codes,
which will in turn imply entropy lower bounds on the entropy that $\nmExt$ requires.

Our starting point is the state-of-the-art upper bound
of McEliece, Rodemich, Rumsey and Welch~\cite{MRRW},
which, loosely speaking, states that
any binary code with relative distance $0.5-\eps$
has rate $O(\eps^2 \log(1/\eps))$ for all sufficiently small $\eps>0$.

Alas, the aforementioned bound does not suffice for the entropy lower bound,
as we need a quantitative bound in terms of the \emph{blocklength} of the code.
We, thus, prove the following theorem, which provides the refined bound that we need.

\begin{ftheorem}\label{thm:MRRW-quant-intro}
    Fix a constant $c \in (0,1/20)$, and let $\eps \in (0,c)$.
    For $K > \frac{c}{\eps^2}$
    let $\C \seq \{0,1\}^K$ be a code with relative distance $\delta = 0.5 - \eps$.
    Then $|\C| < 2^{\frac{23}{c} \eps^2 \log(1/\eps) K}$.
\end{ftheorem}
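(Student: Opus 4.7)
The plan is to apply the second MRRW linear programming bound in a non-asymptotic form, specialize it to $\delta = 1/2 - \eps$ via elementary Taylor and entropy estimates, and then use the hypothesis $K > c/\eps^2$ to absorb the resulting polynomial-in-$K$ prefactor into the main exponential term.

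First I would invoke MRRW in the finite-length form
\[
|\C| \;\leq\; q(K) \cdot 2^{K \cdot H\left(\tfrac{1}{2} - \sqrt{\delta(1-\delta)}\right)},
\]
where $q(K) = K^{O(1)}$ is a universal polynomial arising from Delsarte's LP via Krawtchouk polynomials (the classical statement of \cite{MRRW} is asymptotic and suppresses this prefactor). Substituting $\delta = \tfrac{1}{2} - \eps$ gives $\sqrt{\delta(1-\delta)} = \tfrac{1}{2}\sqrt{1 - 4\eps^2}$, and the elementary Taylor estimate $\sqrt{1-x} \geq 1 - \tfrac{x}{2} - x^2$ on $[0,1/2]$ (valid since $\eps < c < 1/20$) yields $\tfrac{1}{2} - \sqrt{\delta(1-\delta)} \leq \tfrac{3}{2}\eps^2$. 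Combined with the standard estimate $H(\eta) \leq \eta \log(1/\eta) + 2\eta$ at $\eta = \tfrac{3}{2}\eps^2$, this gives $H\!\left(\tfrac{1}{2} - \sqrt{\delta(1-\delta)}\right) \leq 4 \eps^2 \log(1/\eps)$ for $\eps$ sufficiently small in terms of $c$, and hence
\[
\log|\C| \;\leq\; 4 \eps^2 \log(1/\eps)\, K + O(\log K).
\]

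The final step absorbs the $O(\log K)$ term into the main exponent using $K > c/\eps^2$. Writing $x := \eps^2 K \geq c$, one has $\log K = \log x + 2 \log(1/\eps)$; a short case analysis (splitting on $x \leq 1/\eps$ versus $x > 1/\eps$, using $\eps < c < 1/20$) shows that $O(\log K)/(\eps^2 \log(1/\eps)\, K) = O(1/c)$ throughout the relevant range, with a hidden constant small enough that the total exponent is bounded by $(23/c) \eps^2 \log(1/\eps) K$. This yields the claimed inequality $|\C| < 2^{(23/c)\eps^2 \log(1/\eps) K}$.

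The main obstacle is the first step: obtaining an explicit non-asymptotic MRRW bound with a controlled polynomial prefactor. The original formulation in \cite{MRRW} is asymptotic, so the work consists either of revisiting the LP proof to track constants through the Krawtchouk polynomial analysis or of appealing to a known finite-length variant (e.g.\ via Levenshtein's refinement of the LP). Once a clean finite-length bound is in hand, the remaining Taylor-expansion, binary-entropy, and case-split estimates are routine.
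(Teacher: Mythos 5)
Your high-level strategy is sound, but you have a genuine gap precisely at the step you flag as ``the main obstacle'': you never establish the finite-length MRRW bound $|\C| \leq q(K)\, 2^{K H(1/2-\sqrt{\delta(1-\delta)})}$ with a controlled, explicit polynomial prefactor $q(K)$. The classical statement in \cite{MRRW} is asymptotic, and extracting an all-$K$ bound with concrete constants from Delsarte's LP and the Krawtchouk analysis is nontrivial bookkeeping that your proposal defers rather than resolves. Since the whole point of the theorem is a bound that is quantitative in $K$ (this is exactly what is needed downstream to lower-bound the entropy), leaving that lemma unproved leaves the argument incomplete. The rest of your calculation --- the Taylor estimate $\tfrac12 - \sqrt{\delta(1-\delta)} \leq \tfrac32\eps^2$, the entropy bound $H(\eta)\leq \eta\log(1/\eta)+2\eta$, and the absorption of the $O(\log K)$ term using $\eps^2 K > c$ --- is routine and would go through if the first step were in hand.

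The paper closes this gap by a different route that sidesteps the LP/Krawtchouk machinery entirely: it uses the spectral framework of Navon and Samorodnitsky \cite{NS09}. Concretely, it cites their result that if $B\seq\{0,1\}^n$ satisfies $\lambda_B \geq (1-2\delta+\eps)n$ (where $\lambda_B$ is the top eigenvalue of the hypercube graph restricted to $B$), then any code of relative distance $\delta$ has $|\C|\leq |B|/\eps$. The paper then takes $B$ to be an annulus of Hamming weights $\{r,r+1\}$ with $r = \Theta(\eps^2 n /c)$, exhibits an explicit two-level test function $f$ supported on $B$, and computes the Rayleigh quotient $\ip{Af,f}/\ip{f,f} > 3\eps n$ by hand. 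Since $|B| = \binom{n}{r}+\binom{n}{r+1}$, the final bound drops out with explicit constants and with no asymptotic hand-waving. In short: the Navon--Samorodnitsky reduction is the key idea your proposal is missing, and it buys you exactly the clean, self-contained, finite-length bound that re-deriving MRRW with explicit constants would otherwise require.
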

We prove \cref{thm:MRRW-quant-intro} in \cref{sec:rate_bounds},
relying on the spectral approach of Navon and Samorodnitsky~\cite{NS09}.

To conclude the proof of the entropy lower bound, we argue that if
the non-malleable extractor $\nmExt$ could support entropy that is smaller
than stated in \cref{thm:t-nmExt lower bound},
then the code $\C$ we derive via \cref{lemma:nmExt-to-code-intro} would have
rate that would violate the lower bound in \cref{thm:MRRW-quant-intro}.

\subsection{Organization}
In \cref{sec:prelims} we present the required preliminaries.
In \cref{sec:rate_bounds} we prove the refined bounds on the rate of binary codes.
Finally, in \cref{sec:proof} we prove our main result, \cref{thm:t-nmExt lower bound},
as well as \cref{lemma:nmExt-to-code-intro},
which captures the connection between non-malleable extractors
and error correcting codes.

\section{Preliminaries}\label{sec:prelims}

We cover the notation and basic definitions used in this paper.

\subsection{Notation}
For $n \in \N$, we denote by $[n]$ the set $\{1, \ldots, n\}$,
and by $U_n$ the random variable that is uniformly distributed over $\{0,1\}^n$.
Throughout, $\log(x)$ is defined as $\log_2(x)$.
The \emph{binary entropy function} $H \colon [0,1] \to [0,1]$
is given by $H(x) = -x\log(x) - (1-x)\log(1-x)$.
We denote by ${\bf 1}_{E}$ the \emph{indicator} of an event $E$.
For a finite set $X$, we denote by $\Pr_{x\in X}[\cdot]$
the probability over an element $x$ that is chosen \emph{uniformly at random} from $X$.

\paragraph{Distance.}
The \emph{relative Hamming distance} (or just \emph{distance}),
over alphabet $\Sigma$, between two vectors $x,y \in \Sigma^n$
is denoted $\dist(x,y) \eqdef \frac{|\{i \in[n] \::\: x_i \neq y_i\}|}{n}$.
If $\dist(x,y) \leq \eps$, we say that $x$ is \emph{$\eps$-close} to $y$,
and otherwise we say that $x$ is \emph{$\eps$-far} from $y$.
Similarly, the \emph{relative distance} of $x \in \Sigma^n$
from a non-empty set $S \subseteq \Sigma^n$
is denoted $\dist(x,S) \eqdef \min_{y \in S} \dist(x,y)$.
If $\dist(x,S)  \leq \eps$, we say that $x$ is \emph{$\eps$-close} to $S$,
and otherwise we say that $x$ is \emph{$\eps$-far} from $S$.

The \emph{total variation distance} between two random variables $X_1,X_2$
over domain $\Omega$ is denoted by
$\tvdist(X_1,X_2) \eqdef \sup_{S \subseteq \Omega} \left |\Pr[X_1 \in S] - \Pr[X_2 \in S] \right|$,
and is equivalent, up to a factor $2$, to their $\ell_1$ distance
$\|X_1 - X_2 \|_1 \eqdef \sum_{\omega \in \Omega} \left|\Pr[X_1 = \omega] -\Pr[X_2 = \omega]\right|$.
We say that $X_1$ is \emph{$\eps$-close} to $X_2$ if $\tvdist(X_1,X_2) \leq \eps$,
and otherwise we say that $X_1$ is \emph{$\eps$-far} from $X_2$.

\begin{remark}
In order to show that $X_1$ is \emph{$\eps$-far} from $X_2$
it suffices to show a randomized {distinguisher} $\D \colon \Omega \to \zo$
such that $\left|\Pr[\D(X_1) = 1] - \Pr[\D(X_2) = 1] \right| > \eps$,
where the probabilities are over the random variables $X_1,X_2$ and the randomness of $\D$.
Note that if such randomized distinguisher exists,
then, by averaging, there is also a \emph{deterministic} distinguisher with the same property.
This, naturally, defines the event $S_\D = \{\omega \in \Omega : \D(\omega) = 1\} \seq \Omega$.
for which we have
$\tvdist(X_1,X_2) = \sup_{S \subseteq \Omega} \left |\Pr[X_1 \in S] - \Pr[X_2 \in S] \right|
\geq \left |\Pr[X_1 \in S_\D] - \Pr[X_2 \in S_\D] \right| > \eps$,
and hence $X_1$ is \emph{$\eps$-far} from $X_2$.
\end{remark}

\subsection{Error correcting codes}
Let $k,n\in\N$, and let $\Sigma$ be a finite alphabet.
An \emph{error correcting code} is a set $\C \seq \Sigma^n$,
and the elements of $\C$ are called its \emph{codewords}.
The parameter $n$ is called the \emph{blocklength} of $\C$,
and $k = \log_{|\Sigma|}(|\C|)$ is the \emph{dimension} of $\C$.
The \emph{relative distance} of a code $\C$ is the minimal relative Hamming distance
between its codewords, and is denoted by
$\delta = \min_{c \neq c' \in \C}\{\dist(c,c')\}$.
The \emph{rate} of the code, measuring the redundancy of the encoding,
is the ratio of its dimension and blocklength, and is denote by $\rho = k/n$.
If the alphabet is binary, i.e., $\Sigma = \{0,1\}$,
we say that $\C$ is a \emph{binary code}.

\subsection{Randomness extractors}
We recall the standard definitions of random sources and several types of extractors,
as well as state known bounds that we will need.

\paragraph{Weak sources.}
For integers $n > k$, an \emph{$(n,k)$-random source} $X$ of min-entropy $k$
is a random variable taking values in $\cc$
such that for every $x\in \cc$ is holds that $\Pr[X = x] \leq 2^{-k}$.
An $(n,k)$-random source $X$ is \emph{flat} if it is uniformly
distributed over some subset $S \seq \cc$ of size $2^k$.

It is well known \cite{CG88} that the distribution of any $(n,k)$-random source
is a convex combination of distributions of flat $(n,k)$-random sources,
and thus it typically suffices to consider flat sources.
We follow the literature, restrict our attention to flat $(n,k)$-random sources,
and refer to them simply as \emph{$(n,k)$-sources}.

\paragraph{Seeded extractors.}
A function $\Ext \colon \cc \times \cc[d] \to \zo$
is a \emph{$(k,\eps)$-seeded extractor} if for \emph{any} $(n,k)$-source $X$,
the distribution of $\Ext(X,U_d)$ is $\eps$-close to $U_1$, i.e.,
$\tvdist(\Ext(X,U_d), U_1) \leq \eps$.
(Recall that $U_m$ denotes the random variable
that is uniformly distributed on $\cc[m]$.)

A function $\Ext \colon \cc \times \cc[d] \to \zo$ is a
\emph{$(k,\eps)$-strong seeded extractor} if for any
$(n,k)$-source $X$ the distribution of $(\Ext(X,U_d),U_d)$
 is $\eps$-close to $U_{d+1}$.
We will need the following lower bound on the source entropy required
by strong seeded extractors, due to Radhakrishnan and Ta-Shma~\cite{RT00}
(see also~\cite{NZ96}).
\begin{theorem}[\cite{RT00}~Theorem 1.9]\label{thm:ext-LB}
	Let $\Ext : \{0,1\}^n \times \{0,1\}^d \to \{0,1\}$ be a
	$(k,\eps)$-strong seeded extractor.
	Then, it holds that
	\begin{center}
		$d > \log(n-k) + 2\log(1/\eps) - c$ and
	    $k \geq 2 \log(1/\eps) - c$,
	\end{center}
     for some absolute constant $c \in \R$.
\end{theorem}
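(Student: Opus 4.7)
My plan is to prove the two inequalities separately: the entropy bound follows from a second-moment argument against a uniformly random flat source, while the seed-length bound comes from a signature-map pigeonhole/covering argument combined with an average pairwise distance inequality forced by the extractor.

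\textbf{Entropy bound $k \geq 2\log(1/\eps) - c$.} Take a uniformly random flat $(n,k)$-source $X$, i.e., a uniformly random size-$2^k$ subset of $\cc$. For each seed $s$ set $A_s = \{x : \Ext(x,s) = 1\}$ and $p_s(X) = |A_s \cap X|/2^k$; the strong-extractor condition unfolds to $\E_s|p_s(X) - 1/2| \leq \eps$ for every flat source. Applying this condition first to the uniform $(n,n)$-source forces $\E_s[||A_s|/2^n - 1/2|] \leq \eps$, so after paying an additive $O(\eps)$ I may assume each column is balanced ($|A_s| = 2^{n-1}$). For such a column, $|A_s \cap X|$ is hypergeometric with mean $2^{k-1}$ and standard deviation $\Theta(2^{k/2})$, so $\E_X|p_s(X) - 1/2| = \Theta(2^{-k/2})$ by Gaussian anti-concentration. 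Swapping expectations via Fubini and applying Markov, some flat source $X^\ast$ achieves $\E_s|p_s(X^\ast) - 1/2| = \Omega(2^{-k/2})$, which equals the TV distance between $(\Ext(X^\ast, U_d), U_d)$ and $U_{d+1}$; the extractor property then forces $\eps = \Omega(2^{-k/2})$, i.e., $k \geq 2\log(1/\eps) - O(1)$.

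\textbf{Seed-length bound $d > \log(n-k) + 2\log(1/\eps) - c$.} Define the signature map $f : \cc \to \cc[D]$ by $f(x) = (\Ext(x,s))_{s \in \cc[d]}$, where $D = 2^d$. A direct pigeonhole on exact signatures already yields the crude bound $d > \log(n-k)$: if $D < n - k$, some level set $f^{-1}(\sigma)$ has $\geq 2^k$ inputs, and any source supported there reduces $\Ext(X, U_d)$ to a deterministic function of $U_d$ whose TV distance from $U_{d+1}$ is $1/2$, contradicting the extractor. To gain the missing $2\log(1/\eps)$ factor, I would promote exact equality to approximate equality in signatures: cover $\cc[D]$ by equivalence classes of vectors that are close in Hamming distance, calibrated so that any $2^k$-subset of $\cc$ whose signatures lie inside one class has average pairwise signature distance strictly smaller than $(1/2 - O(\eps)) D$. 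On the matching side, applying Cauchy--Schwarz to the bias function $g(s) = 2p_s(X) - 1$ (using $\E_s g(s)^2 \leq (\E_s|g|) \cdot \max_s|g| \leq 2\eps$) yields the lower bound $\E_{x,y \in X}[\dist(f(x),f(y))] \geq (1/2 - O(\eps)) D$ forced by the extractor. If the covering can be chosen with covering number $2^{O(\eps^2 D)}$, pigeonhole against $2^{n-k}$ candidate sources produces a violating source unless $\eps^2 D \gtrsim n - k$, i.e., $d \geq \log(n-k) + 2\log(1/\eps) - O(1)$.

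\textbf{Main obstacle.} The entropy bound is a clean second-moment / Gaussian anti-concentration computation, and I expect it to go through without any real surprise. The seed-length bound is the delicate part: the technical crux is the covering-number step, where I need to exploit \emph{average} pairwise signature distance rather than the worst-case diameter of an equivalence class in order to extract the $\eps^2$ scaling in the exponent. A naive Kleitman-style diameter covering would lose only a constant factor in the exponent and miss the $2\log(1/\eps)$ dependence entirely, so one must argue instead about the full pairwise-distance distribution inside each class. Making this quantitative, probably via a Johnson-type bound on the multiplicities of close signatures, is where I expect the bulk of the technical work to go.
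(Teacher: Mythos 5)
First, a framing point: \cref{thm:ext-LB} is \emph{imported} from Radhakrishnan and Ta-Shma~\cite{RT00} and is not proved in this paper, so there is no in-paper proof to compare against; I can only assess your argument on its own terms.

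Your entropy bound $k \geq 2\log(1/\eps) - O(1)$ is essentially sound and is the standard second-moment argument. Two small things to tighten: (i) you do not actually need to ``pay $O(\eps)$ to balance the columns'' --- for an imbalanced column the mean of $p_s(X)-\tfrac12$ is already $\Omega(1)$ in magnitude, so $\E_X|p_s(X)-\tfrac12|$ is only larger; and (ii) ``Gaussian anti-concentration'' should be justified for the hypergeometric, e.g.\ by showing $\E[Y^4] = O((\E[Y^2])^2)$ for $Y = p_s(X)-\tfrac12$ and invoking H\"older ($\E[Y^2] \leq (\E|Y|)^{2/3}(\E[Y^4])^{1/3}$), which avoids any local CLT. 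With those repairs the argument gives $\eps \geq \Omega(2^{-k/2})$ as you want.

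The seed-length bound is where the proposal genuinely breaks. You correctly derive the constraint $\E_{x,y\in X}[\dist(f(x),f(y))] \geq (\tfrac12-\eps)D$ from $\E_s g(s)^2 \leq 2\eps$, and the exact-signature pigeonhole gives $d > \log(n-k)$, but the covering step you defer to --- covering $\cc[D]$ by $2^{O(\eps^2 D)}$ classes such that \emph{any} $2^k$-subset of a class has average pairwise distance below $(\tfrac12 - \Omega(\eps))D$ --- does not hold for any covering I can see, and it is quantitatively out of reach for the natural ones. Concretely, for a cap $\{y\in\{\pm1\}^D : \ip{y,c} \geq \alpha D\}$, any $M$ points inside it satisfy $\frac{1}{M^2}\sum_{i,j}\ip{y_i,y_j} = \|\frac1M\sum y_i\|^2 \geq \alpha^2 D$, hence average pairwise distance $\leq (\tfrac12 - \tfrac{\alpha^2}{2})D$; to beat $(\tfrac12-\eps)D$ you need $\alpha \geq \sqrt{2\eps}$, and the covering number of such caps is $2^{\Theta(\eps D)}$, not $2^{\Theta(\eps^2 D)}$. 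Coordinate-subspace classes of codimension $m$ give average distance $\leq (D-m)/2$ and need $m \geq 2\eps D$, again $2^{\Theta(\eps D)}$ classes. Either way the pigeonhole yields only $d \geq \log(n-k) + \log(1/\eps) - O(1)$, losing a full factor of $2$ in the $\log(1/\eps)$ term. A Johnson-type bound does not obviously rescue this, since the extractor gives only an \emph{average} distance lower bound, under which exponentially large configurations are possible in any Hamming ball of radius near $D/4$. So the seed-length half of the statement should be regarded as unproved by this proposal; Radhakrishnan and Ta-Shma's actual argument is not a covering argument of this form, and recovering the $2\log(1/\eps)$ term requires a genuinely different idea.
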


\paragraph{Non-malleable extractors.}
Informally, a \emph{non-malleable extractor} $\nmExt$ is a seeded extractor
that for any source $X$ and seed $s$ outputs a bit $\nmExt(X,s)$
that is nearly uniform even if given the seed $s$ and value $\nmExt(X, s')$
for an adversarially selected seed $s'$.

Formally, we say that a function $\A \colon \cc[d] \to \cc[d]$
is an \emph{adversarial function} if it has no fixed points,
i.e., if $\A(s) \neq s$ for all $s \in \cc[d]$.
Non-malleable extractors are defined as follows.

\begin{definition}
    A function $\nmExt \colon \cc \times \cc[d] \to \zo$
    is a \emph{$(k,\eps)$-non-malleable extractor}
    if for any $(n,k)$-source $X$,
    and for any adversarial function $\A \colon \cc[d] \to \cc[d]$,
    it holds that the distribution of the 3-tuple
    $(\nmExt(X,U_d),\nmExt(X,\A(U_d)),U_d)$
    is $\eps$-close to $(U_1,\nmExt(X,\A(U_d)),U_d)$; that is,
    \[
        \dist_{TV}\Big( \big(\nmExt(X,U_d),\nmExt(X,\A(U_d)),U_d \big) \:,\: \big( U_1,\nmExt(X,\A(U_d)),U_d \big)  \Big) \leq \eps.
    \]
\end{definition}

We will also consider the more general notion of \emph{$t$-non-malleable extractors}, in which it is possible to extract randomness even given \emph{multiple} (namely, $t$) outputs of the extractor with respect to adversarially chosen seeds.

\begin{definition}
    A function $\nmExt \colon \cc \times \cc[d] \to \zo$
    is a \emph{$(k,\eps)$-$t$-non-malleable extractor}
    if for any $(n,k)$-source $X$
    and for any $t$ adversarial functions $\A_1,\dots,\A_t \colon \cc[d] \to \cc[d]$
    it holds that
    \[
        \dist_{TV}\Big( \big(\nmExt(X,U_d),(\nmExt(X,\A_i(U_d)))_{i=1}^t,U_d \big)
        \:,\: \big( U_1,(\nmExt(X,\A_i(U_d)))_{i=1}^t,U_d \big)  \Big) \leq \eps.
    \]
\end{definition}

We conclude this section by stating a recent result,
due to Ben-Aroya et~al.~\cite{BCDLS},
extending a result by Dodis and Wichs \cite{DW09},
which complements our \cref{thm:t-nmExt lower bound} by showing that
the lower bound on the seed length $d$ in the \cref{thm:t-nmExt lower bound}
is tight up to an additive factor of $O(\log(t))$, and
the lower bound on $k$ is almost tight in $\eps$,
up to an additive factor of $\log\log(1/\eps)$.

\begin{theorem}[\cite{BCDLS, DW09}]
\label{thm:BCDLS}
    Let $\eps>0$ be sufficiently small, and let $n,k,d,t \in \N$.
    There exists a $(k,\eps)$-$t$-non-malleable extractor $\nmExt : \{0,1\}^n \times \{0,1\}^d \to \{0,1\}$
    with
    \begin{center}
    	$d \leq \log(n) + 2\log(1/\eps) + 2\log(t) + O(1)$ and
	    $k \leq \log(d) + 2 \log(1/\eps) + t + O(\log(t))$.
    \end{center}
\end{theorem}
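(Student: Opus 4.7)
The plan is to prove existence via the probabilistic method: draw $\nmExt\colon\cc\times\cc[d]\to\zo$ by letting each of its $2^{n+d}$ truth-table entries be an independent uniform bit, and show that with positive probability it is a $(k,\eps)$-$t$-non-malleable extractor at the claimed parameters. Since every $(n,k)$-source is a convex combination of flat ones, it suffices to fix a flat $X\seq\cc$ of size $K=2^k$ and a tuple $\A=(\A_1,\dots,\A_t)$ of adversarial functions, and to control
\[
\Delta(X,\A)\eqdef \tvdist\!\Big(\bigl(\nmExt(X,U_d),(\nmExt(X,\A_i(U_d)))_{i=1}^t,U_d\bigr),\bigl(U_1,(\nmExt(X,\A_i(U_d)))_{i=1}^t,U_d\bigr)\Big),
\]
which unpacks to $\Delta(X,\A)=\tfrac{1}{2}\cdot 2^{-d}\sum_{s}\sum_{b\in\zo^t}|q^{(s)}_{0,b}-q^{(s)}_{1,b}|$, with $q^{(s)}_{a,b}=\Pr_{x\in X}[\nmExt(x,s)=a\wedge(\nmExt(x,\A_i(s)))_{i=1}^t=b]$.

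The first step is to bound $\E_{\nmExt}[\Delta(X,\A)]$. For a fixed $s$, the $K$ tuples $(\nmExt(x,s),(\nmExt(x,\A_i(s)))_i)_{x\in X}$ are i.i.d.\ uniform on $\zo^{t+1}$ (degeneracies when two $\A_i(s)$ coincide only concentrate mass on consistent strings and do not worsen the bound), so the inner quantity $I_s=\sum_b|q^{(s)}_{0,b}-q^{(s)}_{1,b}|$ is the expected $\ell_1$ deviation of an empirical distribution on $\zo^{t+1}$ from uniform. A standard $\chi^2$ computation gives $\E[I_s]=O(\sqrt{2^t/K})$---crucially $2^{t/2}$, not $2^t$---so averaging over $s$ yields $\E[\Delta(X,\A)]=O(\sqrt{2^t/K})\le\eps/2$ once $K=\Omega(2^t/\eps^2)$.

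The second step is concentration of $\Delta$ around its mean together with the union bound. Each $I_s$ depends on at most $(t+1)K$ truth-table bits with per-bit Lipschitz constant $O(1/K)$; McDiarmid combined with the near-independence of the $I_s$'s across seeds (the bit-sets overlap only when $s'\in\{s,\A_1(s),\dots,\A_t(s)\}$, an $O(t)$-fold dependency) yields $\Pr_{\nmExt}[\Delta(X,\A)>\E[\Delta]+\eps/2]\le \exp(-\Omega(\eps^2 KD/t))$ with $D=2^d$. Union-bounding over the $\binom{2^n}{K}\le 2^{K(n-k)+O(K)}$ flat sources and the $\le 2^{tdD}$ relevant adversarial-tuple configurations gives the two claimed constraints: $d\ge\log(n)+2\log(1/\eps)+2\log(t)+O(1)$ (needed to beat the $2^{tdD}$ factor via the outer $D$ in the exponent) and $k\ge\log(d)+2\log(1/\eps)+t+O(\log t)$ (needed to beat the $2^{K(n-k)}$ source count while also clearing the inner sampling threshold $K=\Omega(2^t d/\eps^2)$).

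The main obstacle I anticipate is maintaining the linear rather than quadratic $t$-dependence in $k$. The naive route that pointwise concentrates each $|q^{(s)}_{0,b}-q^{(s)}_{1,b}|$ to within $\eps/2^t$ would require $K=\Omega(4^t/\eps^2)$ and produce the wrong $+2t$ in $k$. Avoiding this requires treating the full $2^{t+1}$-dimensional $\ell_1$-fluctuation as a single quantity via the $\chi^2$ estimate of step one, which pays only $2^{t/2}$ in the deviation and thus preserves the claimed $+t$. An analogous subtlety for the $+2\log t$ in $d$ is that the step-two concentration must only cost a polynomial factor in $t$ (from overlap of truth-table entries), not a factor in $tD$, in the exponent of the Hoeffding-type bound; this is why tracking the $O(t)$-fold dependency structure across seeds, rather than appealing to a blunt worst-case Lipschitz constant, is the technical crux.
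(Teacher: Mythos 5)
The paper does not prove this theorem; it imports it verbatim from \cite{BCDLS} and \cite{DW09} (as the bracketed citation in the theorem header indicates). So there is no proof in the paper to compare yours against, only the proofs in those references, which indeed proceed by the probabilistic method as you do. Your Step~1 is standard and correct: fixing a flat source and an adversary tuple, averaging over a uniformly random truth table, the $\chi^2$/first-moment computation gives $\E[\Delta(X,\A)] = O(\sqrt{2^t/K})$, which is where the $+t$ (rather than $+2t$) in $k$ comes from.

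The gap is in Step~2, specifically in the assertion that the family $\{I_s\}_s$ has only ``an $O(t)$-fold dependency'' because ``the bit-sets overlap only when $s'\in\{s,\A_1(s),\dots,\A_t(s)\}$.'' That characterization of the overlap is wrong: $I_s$ depends on the truth-table columns indexed by $\{s\}\cup\{\A_i(s)\}_i$, and $I_{s'}$ on $\{s'\}\cup\{\A_i(s')\}_i$, and these overlap whenever, e.g., $\A_i(s)=\A_j(s')$ for some $i,j$, which has nothing to do with $s'$ lying in $\{s,\A_1(s),\dots\}$. Concretely, take $\A_1$ with $\A_1(s)=s^{*}$ for every $s\neq s^{*}$ (a legal adversarial function with no fixed point). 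Then every $I_s$ with $s\neq s^{*}$ depends on the $K$ bits $(x,s^{*})_{x\in X}$, so the dependency graph among the $I_s$'s is essentially a clique, not bounded-degree, and the per-bit Lipschitz constant $c_{(x_0,s_0)}$ of $\Delta$ scales like $m_{s_0}/(DK)$ where $m_{s_0}=1+\sum_i |\A_i^{-1}(s_0)|$ can be as large as $\Theta(tD)$. Plugging this into McDiarmid gives $\sum_i c_i^2 = \Theta(t^2/K)$ in the worst case and a tail of only $\exp(-\Omega(\eps^2 K/t^2))$, with no factor of $D$ in the exponent --- far too weak to beat the $2^{tdD}$ union bound over adversaries. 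So for precisely the adversaries where the dependency structure is worst, your concentration claim $\exp(-\Omega(\eps^2 KD/t))$ does not follow from the tools you invoke.

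Fixing this requires an additional idea, and it is the heart of the matter. One natural route is to first reduce to an adversary-free quantity by observing that $\max_{\A}\Delta(X,\A)=\frac{1}{2D}\sum_s\max_{\sigma_s} I_s(\sigma_s)$, since the optimization decouples over $s$; but then each $I_s^{\max}$ depends on the whole truth table and the concentration of the sum is again delicate. Another is to argue that highly unbalanced adversaries (large $m_{s_0}$) are actually benign for non-malleability because $\nmExt(X,\A_i(U_d))$ becomes nearly deterministic, and to split into cases; this requires care to avoid losing factors. Either way, your current sketch papers over the one step that is genuinely hard, and the ``$O(t)$-fold dependency'' heuristic as stated is simply false. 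If you want to flesh this out, look at how Dodis--Wichs (Lemma on existence, STOC~2009) and Ben-Aroya et~al.\ handle the union bound over adversarial functions; the decomposition they use is materially different from a blunt McDiarmid-plus-union-bound on $\Delta(X,\A)$.
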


\section{Refined coding bounds}\label{sec:rate_bounds}

As we mentioned in the technical overview (\Cref{sec:tech}),
we prove our entropy lower bound for non-malleable extractors
by deriving codes from extractors and bounding the rate of these codes.
To this end, in this section we prove refined bounds
on the rate of binary codes with a given minimum distance.
Our starting point is the seminal result of McEliece, Rodemich, Rumsey and Welch~\cite{MRRW}.

\begin{theorem}[\cite{MRRW}]\label{thm:MRRW orig}
    Any code $\C \seq \{0,1\}^n$ with
    relative distance $\delta \in (0,\half)$ has rate at most
    $H\left(\frac{1}{2} - \sqrt{\delta(1-\delta)} \right) + o(1)$,
	where $o(1)$ is some function that tends to zero as $n$ grows to infinity.
\end{theorem}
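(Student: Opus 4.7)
The plan is to attack this via Hoffman's ratio bound applied to a suitable Cayley graph on $\mathbb{F}_2^n$ --- the spectral route that the authors themselves will take (following Navon and Samorodnitsky) when they derive the quantitative refinement \cref{thm:MRRW-quant-intro} in the next subsection. This is equivalent to, but somewhat cleaner than, the classical Delsarte linear programming derivation of MRRW, and in either formulation the technical heart is controlling Krawtchouk polynomials.

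First I would set up the Cayley graph. Pick a connection set $T \subseteq \mathbb{F}_2^n \setminus \{0\}$ whose elements all have Hamming weight strictly less than $\delta n$, and form the Cayley graph $G_T$ on $\mathbb{F}_2^n$ with $x \sim y$ iff $x+y \in T$. Any code $\C$ of relative distance at least $\delta$ is then an independent set in $G_T$, since distinct codewords differ by a vector of weight at least $\delta n$, which by construction cannot lie in $T$. Because $G_T$ is a Cayley graph on the abelian group $\mathbb{F}_2^n$, its eigenvalues are indexed by subsets $S \subseteq [n]$ and equal $\lambda_S = \sum_{v \in T}(-1)^{|v \cap S|}$; when $T$ consists of all vectors of a single weight $w$, this is exactly the Krawtchouk polynomial value $K_w(|S|)$.

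Next I would invoke Hoffman's ratio bound: for a regular graph on $N$ vertices with largest and smallest eigenvalues $\lambda_{\max}$ and $\lambda_{\min}$,
\[
    \alpha(G) \;\leq\; N \cdot \frac{-\lambda_{\min}}{\lambda_{\max} - \lambda_{\min}}.
\]
Applied with $N = 2^n$ and with $T$ taken to be a suitable single weight class, this gives $|\C| \leq 2^n \cdot \tfrac{-\min_j K_w(j)}{K_w(0) - \min_j K_w(j)}$. I would then choose $w = \tau n$ optimally as a function of $\delta$, using the classical asymptotics that the minimum of $K_w$ on $\{0,\dots,n\}$ is attained essentially at the smallest root of $K_w$, which for $w = \tau n$ lies asymptotically at $(1/2 - \sqrt{\tau(1-\tau)})n$. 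Imposing the feasibility constraint that $T$ contains only weights below $\delta n$ and optimising $\tau$ against this constraint yields, after taking logarithms and dividing by $n$, the MRRW rate bound $H(1/2 - \sqrt{\delta(1-\delta)}) + o(1)$.

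The main obstacle is the asymptotic analysis of Krawtchouk polynomials, and specifically proving that the smallest root of $K_{\tau n}$ on $[0,n]$ lies asymptotically at $(1/2 - \sqrt{\tau(1-\tau)})n$. This is the nontrivial root-location theorem (due to Levenshtein, and implicit in the original MRRW paper) that must be established via either Christoffel--Darboux manipulations of the three-term recurrence for $K_w$ or a saddle-point analysis of the generating function $\sum_j K_w(j) z^j = (1-z)^w (1+z)^{n-w}$. All other steps --- the Cayley graph setup, the appeal to Hoffman, and the final algebraic conversion of the eigenvalue ratio $-\lambda_{\min}/\lambda_{\max}$ into $H(1/2 - \sqrt{\delta(1-\delta)})$ --- are routine once the root-location fact is in hand.
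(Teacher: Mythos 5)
First, a point of context: the paper does not prove \cref{thm:MRRW orig} at all --- it is quoted as a black-box citation of \cite{MRRW}. The spectral work in \cref{sec:rate_bounds} is directed at the quantitative refinement \cref{thm:MRRW-quant}, and there the paper invokes Navon--Samorodnitsky's \cref{thm:NS rate bound}, which bounds $|\C|$ by $|B|/\eps$ whenever the top eigenvalue of the \emph{$n$-regular hypercube graph} restricted to a small set $B$ (a thin Hamming sphere) exceeds $(1-2\delta+\eps)n$. That is a spectral argument, but it is neither Hoffman's ratio bound nor a Cayley graph with a fixed-weight connection set, so even as a comparison with the paper's eventual machinery your sketch follows a genuinely different route, and in any case it is aimed at a statement the paper simply imports.

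On its own terms, your sketch has a concrete gap at the optimisation step. Taking $T=\{v : |v|=w\}$ for a single $w<\delta n$ and applying Hoffman's ratio bound to that Cayley graph encodes only the constraint that no two codewords lie at Hamming distance \emph{exactly} $w$; it throws away the constraints from every other forbidden distance in $\{1,\dots,\lceil\delta n\rceil-1\}$. Estimating $|\min_j K_w(j)|$ by the equioscillation envelope $\sqrt{2^n\binom{n}{w}\big/\binom{n}{j}}$ and optimising $w=\tau n$ over $\tau<\delta$ gives an exponent of roughly $\tfrac{3}{2}-\tfrac{1}{2}H(\delta)-\tfrac{1}{2}H\bigl(\half-\sqrt{\delta(1-\delta)}\bigr)$, which is strictly weaker than the MRRW exponent $H\bigl(\half-\sqrt{\delta(1-\delta)}\bigr)$ throughout the range and in particular does not tend to $0$ as $\delta\to\half$. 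The actual MRRW argument does not use a single Krawtchouk polynomial: it exhibits the LP witness
\[
  f(x)=\frac{1}{a-x}\bigl(K_t(a)K_{t+1}(x)-K_{t+1}(a)K_t(x)\bigr)^2,
\]
the squared Christoffel--Darboux kernel with $a$ placed just below the smallest root of $K_{t+1}$, and shows that this polynomial has a nonnegative Krawtchouk expansion while being nonpositive on $[\delta n, n]$. That construction, which couples all of the forbidden distances simultaneously, is the missing ingredient in your outline; the root-location asymptotics you flag are indeed needed to tune $a$ and $t$, but by themselves they will not rescue a single-weight Hoffman bound.
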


Observe that in particular, by plugging in $\delta = 0.5 - \eps$
for sufficiently small $\eps>0$, and letting $n$ be sufficiently large
\Cref{thm:MRRW orig} implies that any family of binary codes
with blocklength $n$ and relative distance $\half-\eps$
has rate $\rho = O(\eps^2 \log(1/\eps))$.

However, the above does not suffice for our needs, as to prove our main result (\Cref{thm:t-nmExt lower bound}) we need a quantitative bound on $n$.
We thus prove the following theorem, which provides the refined bound that we seek.

\begin{theorem}\label{thm:MRRW-quant}
    Fix some constant $c \in (0,1/20)$, and let $\eps \in (0,c)$.
    For $n > \frac{c}{\eps^2}$, let $\C \seq \{0,1\}^n$ be a code with relative distance $\delta = \half - \eps$.
    Then $|\C| < 2^{\frac{23}{c} \eps^2 \log(1/\eps) n}$.
\end{theorem}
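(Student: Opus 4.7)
My plan is to adapt the spectral/LP framework used by Navon and Samorodnitsky in their proof of the MRRW bound, while keeping explicit non-asymptotic control on every error term so that the inequality holds for every $n > c/\eps^2$ rather than only in the $n \to \infty$ limit. The starting point is the standard Delsarte inequality: for any $f \colon \{0,1\}^n \to \R$ whose Fourier coefficients are nonnegative and which is nonpositive on all vectors of Hamming weight $\geq d := \delta n$, one has $|\C| \leq 2^n f(0) / \sum_x f(x)$. This follows by expanding $\sum_{x,y \in \C} f(y-x)$ via Plancherel and using that $|\widehat{1_\C}(\xi)|^2 \geq 0$ for every character $\xi$.

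To instantiate this in the MRRW style I take $f = g * \tilde g$ for a radial function $g$ built from an eigenvector of the Cayley graph on $\{0,1\}^n$ whose connection set is the Hamming sphere of radius $r$, where $r$ is tuned so that the Krawtchouk polynomial $K_r^{(n)}$ has its smallest positive root at $d$. This construction forces $f(x) \leq 0$ for $|x| \geq d$, and the factorization $f = g * \tilde g$ gives $\hat f \geq 0$ automatically. The quantitative step is to evaluate $f(0)$ and $\sum_x f(x)$ in terms of binomial coefficients and Krawtchouk values, and to control each approximation explicitly. For $\delta = 1/2 - \eps$ the optimal parameter behaves as $r/n \approx 1/2 - \sqrt{\delta(1-\delta)} \approx \eps^2$, so the leading term of $\log|\C|/n$ becomes $H(\eps^2) \leq 2\eps^2 \log(1/\eps) + O(\eps^2)$ by the standard entropy expansion.

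The residual error $E(n,\eps)$ collects the Stirling error in $\binom{n}{r}$, the integer-rounding error from having to choose $r$ as the nearest integer to the true Krawtchouk root, and the slack from evaluating $K_r$ near (rather than exactly at) that root. Each of these contributes a term of order $O(\log n / n)$ plus a constant multiplicative factor in $\eps$, so under the hypothesis $n > c/\eps^2$ all three are dominated by a moderate multiple of $\eps^2 \log(1/\eps)$. Putting the pieces together yields a bound of the form $\log|\C|/n \leq H(\eps^2) + E(n,\eps) \leq (23/c) \, \eps^2 \log(1/\eps)$ after the constants are chased through.

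The main obstacle, and the reason the final exponent carries the loose constant $23/c$ rather than the asymptotically optimal $\approx 2$, is precisely this simultaneous bookkeeping: one must ensure that the sum of the Stirling error, the rounding error from integrality of $r$, the slack in the Krawtchouk approximation, and the overhead of $H(\eps^2) - 2\eps^2 \log(1/\eps)$ can all be absorbed into a single multiplicative constant times $\eps^2 \log(1/\eps)$. No individual estimate is difficult, but all must be carried out with explicit constants rather than soft $o(1)$ terms, and the quantitative hypothesis $n > c/\eps^2$ with $c < 1/20$ is what provides enough headroom for each error source to be comfortably accommodated by the constant $23/c$.
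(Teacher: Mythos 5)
Your proposal takes a genuinely different route from the paper's. You work directly with the Delsarte LP inequality $|\C| \leq 2^n f(0)/\sum_x f(x)$ and propose to build $f = g * \tilde g$ from a Krawtchouk eigenfunction, then manually chase the Stirling error, the integer-rounding error in the radius $r$, and the slack in locating the relevant Krawtchouk root. The paper instead invokes a black-box spectral proposition of Navon and Samorodnitsky: if $\lambda_B \geq (1-2\delta+\eps')n$, where $\lambda_B$ is the maximal eigenvalue of the subgraph of the hypercube graph (connection set = Hamming sphere of radius~$1$, not radius~$r$) induced on a subset $B$, then $|\C| \leq |B|/\eps'$. It then takes $B = \{x : |x| \in \{r, r+1\}\}$ and a two-valued test function, equal to $a$ on weight $r$ and $a\sqrt{r/n}$ on weight $r+1$; the Rayleigh quotient has an explicit closed form which is lower-bounded by $3\eps n$ by a one-line calculation once $r \in [9\eps^2 n/c,\, 10\eps^2 n/c]$, and $|B|$ is bounded by elementary binomial estimates. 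This sidesteps Krawtchouk polynomials and their roots entirely and collapses the error bookkeeping --- which you correctly identify as the main difficulty, but do not actually carry out --- into a single clean inequality. Both routes yield the correct leading order $\Theta(\eps^2\log(1/\eps))$, but the paper's test function is simple enough that the whole derivation fits in a few displayed lines, whereas your plan would still require an explicit non-asymptotic Krawtchouk analysis (locating the root, bounding the convolution normalization, etc.) before it actually delivers the stated constant $23/c$; as written it is a plausible blueprint rather than a proof.
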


\begin{proof}
The proof follows the general approach of Navon and Samorodnitsky~\cite{NS09},
who provide a spectral graph theoretic framework to prove upper bounds on
the rate of binary codes.

We will need the following definition, which generalizes
the notion of a \emph{maximal eigenvalue} to subsets of the hypercube.

\begin{definition}
    Let $A \in \{0,1\}^{2^n \times 2^n}$ be the adjacency matrix
    of the hypercube graph; that is,
    $A_{x,y} = 1$ if and only if $x \in \{0,1\}^n$ and $y \in \{0,1\}^n$ differ in exactly one coordinate.
    Given a set $B \seq \{0,1\}^n$, we define
    \[
        \lambda_B = \max_{\substack{f : \{0,1\}^n \to \R \\  \supp(f) \seq B}}  \frac{\ip{A f,f}}{\ip{f,f}} \enspace.
    \]
\end{definition}

To better understand the definition of $\lambda_B$, it is convenient to consider the
subgraph $H_B$ of the hypercube graph $\cc[n]$ induced by the vertices in $B$, and
observe that $\lambda_B$ is the maximal eigenvalue of the adjacency matrix of $H_B$.
Navon and Samorodnitsky~\cite{NS09} prove the following result.

\begin{proposition}[{\cite[Proposition 1.1 ]{NS09}}]
\label{thm:NS rate bound}
  Let $\C \seq \{0,1\}^n$ be a code with relative distance $\delta > 0$,
  and let $\eps > 0$.
  Suppose that for a subset $B \seq \{0,1\}^n$
  it holds that $\lambda_B \geq (1-2\delta + \eps)n$.
  Then $|\C| \leq |B|/\eps$.
\end{proposition}

The foregoing theorem naturally suggest the following proof strategy:
to upper bound the rate of a binary code $\C$
with relative distance $\delta = 0.5 - \eps$,
it suffcies to exhibit a (small as possible) set $B \seq \{0,1\}^n$
whose corresponding maximal eigenvalue satisfies $\lambda_B \geq 3\eps n$;
note that the smaller $B$ is, the better upper bound we get on the rate of $\C$.

Towards this end, let $r \in [n]$ be a parameter to be chosen later,
and let \[ B = \big\{x \in \{0,1\}^n : |x| \in \{r,r+1\} \big\} \enspace.\]
We lower bound the maximal eigenvalue $\lambda_B$ by showing a particular
function $f$ that is supported on $B$, such that
$\frac{\ip{A f,f}}{\ip{f,f}} \geq 3\eps n$. Specifically,
for some $a,b \in \R$ to be chosen later, we define $f: \{0,1\}^n \to \R$ as
\[
    f(x) =
    \begin{cases}
        a & \text{if } |x| = r \\
        b & \text{if } |x| = r+1 \\
        0 & \text{otherwise} \enspace.
    \end{cases}
\]
Clearly $\supp(f) \seq B$. Observe that
\[
  \frac{\ip{A f,f}}{\ip{f,f}}
  = \frac{ab {n \choose r} \cdot (n-r)}{a^2 {n \choose r} + b^2 {n \choose r+1}}
  = \frac{ab {n \choose r} \cdot (n-r)}{a^2 {n \choose r} + b^2 {n \choose r} \cdot \frac{n-r}{r+1}}
  > \frac{ab \cdot r(n-r)}{a^2 \cdot r + b^2 \cdot (n-r)} \enspace.
\]
By choosing $r$ to be an integer in the interval
$\left[ \frac{9 \eps^2}{c} n, \frac{10 \eps^2}{c} n \right]$
and letting $b = a\sqrt{\frac{r}{n}}$ we get that%
\footnote{Note that by the assumption in the theorem
we have $1 < \frac{\eps^2}{c} n < n$. In particular, the interval
$\left[ \frac{9 \eps^2}{c} n, \frac{10 \eps^2}{c} n \right]$ contains an integer.}
\[
  \frac{\ip{A f,f}}{\ip{f,f}}
  > \frac{a^2 \sqrt{r/n}\cdot r(n-r)}{a^2 r + a^2 \cdot (r/n) \cdot (n-r)}
  = \frac{\sqrt{rn} (n-r)}{2n-r}
  > 3 \eps n \enspace,
\]
where the last inequality uses the assumptions that $\eps < c < 1/20$,
which implies that $r \leq \frac{10 \eps^2}{c} n < \frac{n}{2}$.
Therefore, by applying \Cref{thm:NS rate bound} we get that
\[
    |\C| \leq \frac{|B|}{\eps}
    =  \frac{{n \choose r} + {n \choose r+1}}{\eps}
    \leq {n \choose r} \cdot \frac{n}{r\eps}
    \leq \frac{c}{9\eps^3} {n \choose \frac{10\eps^2}{c} n}
    \leq \frac{c}{9\eps^3} \left(\frac{c e}{10\eps^2}\right)^{\frac{10\eps^2}{c} n}
    < 2^{\frac{23\eps^2 \log(1/\eps)}{c} n}
    \enspace,
\]
which concludes the proof of \cref{thm:MRRW-quant}.
\end{proof}

\section{Proof of \Cref{thm:t-nmExt lower bound}}\label{sec:proof}

In this section we prove \Cref{thm:t-nmExt lower bound}, which we restate here
with slightly more specific parameters than those stated above.

\paragraph{\Cref{thm:t-nmExt lower bound} (restated):}
{\em
    Let $n,k,d,t \in \N$ be parameters such that $t \leq 2^{d/2}$,
    and let $\eps \in (0,c_0/2)$ for $c_0 = \min\{1/2^{c},1/20\}$, where $c > 0$ is the constant from \cref{thm:ext-LB}.
	If $\nmExt : \{0,1\}^n \times \{0,1\}^d \to \{0,1\}$
	is a $(k,\eps)$-$t$-non-malleable extractor, then
\begin{center}
	$d > \log(n-k) + 2\log(1/\eps) - O(1)$ and
	$k \geq \log(d) + 2 \log(1/\eps) - \log\log(1/\eps) + \log(t) - O(1)$.
\end{center}
}
\medskip

We start, in \cref{sec:proof-specific},
with the proof of \Cref{thm:t-nmExt lower bound} for the special case where $t=1$
(i.e., for standard non-malleable extractors).
Then, in \cref{sec:proof-general}, we provide the full proof
for general values of $t$.

\subsection{Proof of \cref{thm:t-nmExt lower bound} for $t = 1$}
\label{sec:proof-specific}
Following the outline provided in \cref{sec:tech},
we start the proof with the following lemma,
showing that any non-malleable extractor
induces an error correcting code with good distance.

\begin{lemma}[\cref{lemma:nmExt-to-code-intro}, restated]
\label{lemma:nmExt-to-code}
	If there exists a $(k,\eps)$-non-malleable extractor $\nmExt : \{0,1\}^n \times \{0,1\}^d \to \{0,1\}$,
    then there exists an error correcting code $\C \seq \{0,1\}^{2^k}$ with relative distance $0.5 - 2\eps$
    and rate $\frac{d-1}{2^k}$.
\end{lemma}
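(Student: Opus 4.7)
The plan is to realize the evaluations of $\nmExt$ on a flat source as a list of binary vectors and to extract from it a large subfamily that is pairwise far in Hamming distance; this subfamily will then serve as the code. Fix an arbitrary flat $(n,k)$-source $X$, viewed as a set of $K = 2^k$ strings in $\{0,1\}^n$, and for each seed $s \in \{0,1\}^d$ let $w^{(s)} \in \{0,1\}^K$ be the evaluation vector with $w^{(s)}_x = \nmExt(x,s)$ for $x \in X$. Writing $D = 2^d$, I will produce a set $S \subseteq \{0,1\}^d$ of size at least $D/2$ such that for any two distinct $s, s' \in S$ one has $\dist(w^{(s)}, w^{(s')}) \geq 0.5 - 2\eps$. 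Taking $\C = \{w^{(s)} : s \in S\}$ then yields a code of blocklength $K = 2^k$ with at least $D/2 = 2^{d-1}$ codewords (rate at least $(d-1)/2^k$) and relative distance at least $0.5 - 2\eps$, as required.

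To exhibit such an $S$, I argue by contradiction, using the non-malleability of $\nmExt$ to preclude a large family of pairwise-close evaluation vectors. Suppose every subset of $\{0,1\}^d$ of size $D/2$ contains two distinct seeds $s, s'$ with $\dist(w^{(s)}, w^{(s')}) < 0.5 - 2\eps$. I can then greedily select disjoint ``bad'' pairs $(s_i, s'_i)$ of such seeds: at each of the first $D/4$ steps at most $D/2 - 2$ seeds have been used, so the hypothesis applies to the remainder and produces another bad pair. Iterating yields a set $B$ of $D/2$ bad seeds, organized as $D/4$ pairs. Define an adversarial function $\A : \{0,1\}^d \to \{0,1\}^d$ that swaps each pair $(s_i, s'_i)$, and on the complement $\{0,1\}^d \setminus B$ (which has the even size $D/2$) acts as an arbitrary fixed-point-free involution obtained by pairing up the remaining seeds. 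By construction $\A$ has no fixed points, and for every $s \in B$ one has $\Pr_{x \sim X}[\nmExt(x, s) = \nmExt(x, \A(s))] > \tfrac{1}{2} + 2\eps$, since $w^{(s)}$ and $w^{(\A(s))}$ agree on strictly more than a $0.5 + 2\eps$ fraction of coordinates.

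To derive a contradiction, consider the deterministic distinguisher $\D(a, b, s)$ that outputs $1$ if and only if $s \in B$ and $a = b$. Evaluated on the genuine distribution $(\nmExt(X, U_d), \nmExt(X, \A(U_d)), U_d)$ the probability of outputting $1$ strictly exceeds
\[
\frac{|B|}{D} \cdot \Bigl(\frac{1}{2} + 2\eps\Bigr) = \frac{1}{4} + \eps,
\]
while on the ideal distribution $(U_1, \nmExt(X, \A(U_d)), U_d)$, where $U_1$ is a uniform bit independent of the remaining coordinates, the probability is exactly $\tfrac{|B|}{D} \cdot \tfrac{1}{2} = \tfrac{1}{4}$. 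The distinguishing advantage therefore strictly exceeds $\eps$, contradicting the $(k,\eps)$-non-malleable extractor guarantee and establishing the claim.

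The main obstacle is really just the bookkeeping of the greedy selection, which has to stop at exactly $D/4$ disjoint bad pairs: this is the largest number for which the contradiction hypothesis keeps applying to the unused seeds, and simultaneously the smallest for which the density $|B|/D = 1/2$ gives a distinguishing advantage of order $\eps$. Once $B$ and $\A$ are in place, the verification that the distinguisher attains the stated advantage, and the extension of $\A$ to a fixed-point-free involution off of $B$, are both routine.
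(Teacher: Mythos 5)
Your proof is correct and follows essentially the same route as the paper's: greedily extract $D/4$ disjoint ``bad'' pairs of seeds with nearby evaluation vectors, define the adversary $\A$ to swap each pair, and distinguish via an equality test on the two output bits. The only cosmetic difference is in the distinguisher: you output $0$ deterministically off $B$ (giving baseline $1/4$ vs.\ $>1/4 + \eps$), while the paper outputs a uniformly random bit off $B$ (giving baseline $1/2$ vs.\ $>1/2 + \eps$); both yield the required distinguishing advantage strictly exceeding $\eps$, and both approaches otherwise coincide.
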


\begin{proof}
    Let $\nmExt : \{0,1\}^n \times \{0,1\}^d \to \{0,1\}$ be a $(k,\eps)$-non-malleable extractor, and let $X$ be an $(n,k)$-source.
    That is, $X \seq \{0,1\}^n$ is a collection of $K = 2^k$ vectors,
    which we denote by $X = \{x_1,\dots,x_K\} \seq \{0,1\}^n$.
	For each seed $s \in \{0,1\}^d$, let $w^{(s)} \in \{0,1\}^K$ be the $K$-bit
	\emph{evaluation vector} defined as
    \[w^{(s)} = \big( \nmExt(x_i,s) \big)_{i \in \{1,\dots,K\}} \enspace.\]
    We claim that the (multi-)set $\{w^{(s)} : s \in \{0,1\}^d \} \seq \{0,1\}^K$
    contains an error correcting code $\C \seq \{0,1\}^{K}$ with relative distance $0.5 - 2\eps$ and rate $\frac{d-1}{K}$.
    \begin{claim}\label{claim:nmExt-to-code}
        There exists a subset $S \seq \{0,1\}^d$ of size $2^{d-1}$
        such that for every two distinct $s,s' \in S$ it holds that $\dist(w^{(s)},w^{(s')}) \geq 0.5 - 2\eps$.
    \end{claim}
    \begin{proof}
        Suppose towards contradiction that for every subset $S' \seq \{0,1\}^d$ of size at least $2^{d-1}$
        there exist distinct seeds $s,s' \in S'$ such that $\dist(w^{(s)},w^{(s')}) < 0.5 - 2\eps$.
        We show below that this contradicts the assumption that $\nmExt$ is a $(k,\eps)$-non-malleable extractor.

        Indeed, by the assumption, we can find $s_1,s'_1 \in \{0,1\}^d$ such that
        $\dist(w^{(s_1)}, w^{(s'_1)}) < 0.5 - 2\eps$.
        Then, we can remove $s_1,s'_1$ from $\{0,1\}^d$, and apply the assumption again,
        to obtain $s_2,s'_2 \in \{0,1\}^d \setminus \{s_1,s'_1\}$ such that
        $\dist(w^{(s_2)}, w^{(s'_2)}) < 0.5 - 2\eps$.
        By iteratively repeating this argument $D/4$ times, where $D = 2^d$,
        we obtain $D/4$ pairs of distinct elements
        $(s_1,s'_1),\dots,(s_{D/4},s'_{D/4})$ such that
        \begin{equation}
        \label{eq:badseeds}
        	\forall j \in [D/4] \quad
        	\dist \left( w^{(s_j)}, w^{(s'_j)} \right) < 0.5 - 2\eps \enspace.
        \end{equation}

        Let $B = \{s_j,s'_j : j\in[D/4]\} \seq \{0,1\}^d$
        denote the set of all such ``bad'' seeds,
        and define an adversarial function $\A \colon \cc[d] \to \cc[d]$
        that matches each pair of bad seeds by
        mapping $\A(s_j) = s'_j$ and $\A(s'_j) = s_j$ for all $j\in[D/4]$,
        and defining $\A(s)$ arbitrarily for all other seeds $s \notin B$.

        Next we prove that $\nmExt$ is not a $(k,\eps)$-non-malleable extractor
        by arguing that the distribution of the random variable
        consisting of the $3$-tuple $(\nmExt(X,U_d),\nmExt(X,\A(U_d)),U_d)$
        is $\eps$-far from $(U_1,\nmExt(X,\A(U_d)),U_d)$, where recall that
        $U_m$ denotes the random variable that is uniformly distributed over $\{0,1\}^m$.
        Indeed, consider the following distinguisher
        $\D \colon \zo \times \zo \times \cc[d] \to \{0,1\}$, defined as
        \begin{equation*}
            \D(b,b',s) = \begin{cases}
                          {\bf 1}_{b = b'}, & \mbox{if } s \in B \\
                          U_1, & \mbox{otherwise}\enspace.
                        \end{cases}
        \end{equation*}
        Clearly $\Pr[\D(U_1,\nmExt(X,\A(U_d)),U_d) = 1] = 0.5$.
        On the other hand, by \cref{eq:badseeds}, for $s$ sampled from $U_d$ we have
        \begin{equation*}
            \Pr[\D(\nmExt(X,s),\nmExt(X,\A(s)), s) = 1] \geq (0.5 + 2\eps)\Pr[s \in B]  + 0.5\Pr[s \notin B] \geq 0.5 + \eps
            \enspace,
        \end{equation*}
        thus contradicting the assumption that $\nmExt$ is a $(k,\eps)$-non-malleable extractor.
        This concludes the proof of \cref{claim:nmExt-to-code}.
    \end{proof}
    Therefore, by \cref{claim:nmExt-to-code}
    there exists a set $\C = \{w^{(s)} : s \in S\} \seq \cc[K]$ of size $2^{d-1}$
    such that for every $x,y \in \C$ it holds that $\dist(x,y) \geq 0.5 - 2\eps$,
    i.e., $\C$ is an error correcting code with relative distance $0.5 - 2\eps$ and rate $\frac{d-1}{2^k}$,
    which completes the proof of \cref{lemma:nmExt-to-code}.
\end{proof}

By applying the bound from \cref{thm:MRRW-quant} to the code obtained in \cref{lemma:nmExt-to-code},
we prove \cref{thm:t-nmExt lower bound} for the case of $t=1$.
\begin{proof}[Proof of \cref{thm:t-nmExt lower bound} for $t=1$]
	Since every non-malleable extractor is, in particular, a strong seeded extractor,
	then by \cref{thm:ext-LB} it holds that the seed length
	is $d > \log(n-k) + 2\log(1/\eps) - c$, as required.
	Furthermore, \cref{thm:ext-LB} also implies that
	\begin{equation}
	\label{eq:ext-LB-k}
	    k \geq 2 \log(1/\eps) - c.
	\end{equation}
	
    By \cref{lemma:nmExt-to-code},
	if $\nmExt : \{0,1\}^n \times \{0,1\}^d \to \{0,1\}$
	is a $(k,\eps)$-non-malleable extractor,
    then there exists an error correcting code $\C \seq \{0,1\}^{2^k}$
    with relative distance $0.5 - 2\eps$ and rate $\frac{d-1}{2^k}$.

    Next, we wish to apply \cref{thm:MRRW-quant} to the code $\C$.
    Recall that by the assumption it holds that $\eps < c_0$ and $c_0 < 1/2^c$,
    and observe that by \cref{eq:ext-LB-k} we have
    $2^k \geq \frac{2^{-c}}{\eps^2} > \frac{c_0}{\eps^2}$.
    Therefore, by applying \cref{thm:MRRW-quant},
    with respect to $c_0$ (recall that $c_0 < 1/20$) and $2\eps < c_0$
    we get that
    \[2^{d-1} \leq |\C| < 2^{\frac{23}{c_0}\cdot (2\eps)^2 \log(1/2\eps) 2^k} \enspace, \]
    and thus $k \geq \log(d) + 2\log(1/\eps) - \log\log(1/\eps) - O(1)$, as required.
\end{proof}

\subsection{Proof of \cref{thm:t-nmExt lower bound} for general $t$}
\label{sec:proof-general}
Next, we extend the idea  presented in \cref{sec:proof-specific} to larger values of $t$.
The key step is the following lemma.

\begin{lemma}\label{lemma:t-nmExt-to-t-code}
	If there exists a $(k,\eps)$-$t$-non-malleable extractor $\nmExt : \{0,1\}^n \times \{0,1\}^d \to \{0,1\}$,
    then, there exists an error correcting code $\C \seq \{0,1\}^{2^k}$ with relative distance $0.5 - 2\eps$
    such that $|\C| \geq (2^{d-1}/t)^{\floor{t/2}}$.
\end{lemma}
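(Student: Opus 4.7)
The plan is to generalize the iterative pair-picking argument from the $t=1$ case (\cref{claim:nmExt-to-code}) to handle $\floor{t/2}$-tuples of seeds. First, I would consider the natural candidate for the code: for each $\floor{t/2}$-subset $T \seq \{0,1\}^d$, define $c_T = \bigoplus_{s \in T} w^{(s)} \in \{0,1\}^{2^k}$. For two distinct such subsets $T, T'$, we have $c_T \oplus c_{T'} = \bigoplus_{s \in T \triangle T'} w^{(s)}$, and $|T \triangle T'| \leq 2\floor{t/2} \leq t$, so the distances between candidate codewords are governed by XORs of at most $t$ evaluation vectors with distinct seeds.

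The key leverage from the $t$-non-malleable property is that it controls such XORs. By designing adversarial functions $\A_i$ to realize specific $(j+1)$-tuples of distinct seeds (for example, via shift maps $\A_i(s) = s \oplus v_i$ with $v_i \neq 0$), and applying the definition together with post-processing by XOR, one obtains that for every $j \leq t$,
\[
\E_{s \sim U_d}\bigl[|\Pr_X[\bigoplus_{i=0}^{j} \nmExt(X, \A_i(s)) = 0 \mid s] - 1/2|\bigr] \leq \eps,
\]
which, after summing over shift vectors, translates into an average bias bound over uniformly chosen $(j+1)$-tuples of distinct seeds.

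The plan is then to argue by contradiction: suppose no family $\mathcal{F}$ of $\floor{t/2}$-subsets of size at least $(2^{d-1}/t)^{\floor{t/2}}$ has the property that all pairwise codeword distances are $\geq 0.5 - 2\eps$. Then, iteratively peel off pairs of ``close'' $\floor{t/2}$-subsets (analogous to the iteration in \cref{claim:nmExt-to-code}), and aggregate the resulting ``bad'' seed tuples $T_j \triangle T_j'$. From these bad tuples, we construct $t$ adversarial functions $\A_1,\dots,\A_t$ and a distinguisher $\D$ that outputs $\mathbf{1}_{\bigoplus_i b_i = 0}$ when the seed lies in the accumulated ``bad'' set $B \seq \{0,1\}^d$, and a fresh uniform bit otherwise, thereby violating the $t$-non-malleable definition.

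The main obstacle is the combinatorial bookkeeping needed to match the precise bound $(2^{d-1}/t)^{\floor{t/2}}$. Concretely, one must (i) carefully track how biased $j$-tuples of seeds (for $j \leq t$) give rise to bad pairs of $\floor{t/2}$-subsets via the identity $T \triangle T' = \{s_1^*,\dots,s_j^*\}$; (ii) aggregate biases across many bad tuples so that the distinguishing advantage totals $\geq \eps$; and (iii) calibrate the peeling so that the number of iterations and the size of $B$ precisely give the required threshold. The exponent $\floor{t/2}$ in the bound strongly suggests a product-like structure in which the iteration yields $\floor{t/2}$ ``rounds'' of peeling, each contributing a factor of roughly $2^{d-1}/t$ in the codeword count, and it is this multi-round product structure that I expect to be the most delicate part of the argument.
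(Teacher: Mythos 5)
Your high-level plan is recognizably aimed in the right direction: you correctly identify the code as XORs of $\floor{t/2}$-subsets of evaluation vectors, correctly observe that pairwise distances reduce to weights of $\le t$-fold XORs $\bigoplus_{s\in T\triangle T'} w^{(s)}$, and correctly plan to derive a contradiction via adversarial functions and a distinguisher. However, there is a genuine gap in the middle of the argument, and your last paragraph shows you are chasing the wrong explanation for the exponent~$\floor{t/2}$.

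The gap is in the peeling. You propose to iteratively remove pairs of ``close'' $\floor{t/2}$-subsets $T_j, T'_j$ from the family of all $\floor{t/2}$-subsets and to aggregate the bad seed tuples $T_j\triangle T'_j$ into a set $B$. But these symmetric differences will overlap heavily (there are $\binom{2^d}{\floor{t/2}}$ subsets and only $2^d$ seeds), and overlapping tuples make it impossible to define a single consistent family of adversarial functions $\A_1,\dots,\A_t\colon\cc[d]\to\cc[d]$: if a seed $s$ lies in two bad tuples $B_1\neq B_2$, you cannot have $\{\A_\ell(s)\}_\ell$ simultaneously realize both. Neither your shift-map idea $\A_i(s)=s\oplus v_i$ (which only produces coset-structured tuples and only yields an average bound, from which one cannot deduce a large ``good'' seed set by deletion) nor the peeling on families of subsets gets around this. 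The paper's proof fixes this by reformulating the intermediate claim: it shows there is a set $S\seq\cc[d]$ of size $2^{d-1}$ such that \emph{every} subset $I\seq S$ with $|I|\le t+1$ has $\weight(\sum_{s\in I}w^{(s)})\ge(0.5-2\eps)K$. The contrapositive is then proved by peeling \emph{disjoint seed subsets} $I_1,I_2,\dots$ of size $\le t+1$ (removing $I_j$ from the seed universe at each step), which gives the disjointness you need for free; the adversarial functions then cycle within each $I_j$, and the distinguisher XORs the appropriate prefix of the adversarial outputs. This is a cleaner and strictly lower-level formulation than peeling on the space of $\floor{t/2}$-subsets.

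Your reading of where $\floor{t/2}$ in the bound comes from is also off. It is not the number of peeling rounds: the peeling is run once, to produce the good set $S$. The exponent comes simply from counting $\floor{t/2}$-subsets of $S$: the code is $\C=\{\sum_{s\in I}w^{(s)}:I\seq S,\,|I|\le\floor{t/2}\}$, and since every two distinct such $I,I'$ satisfy $I\triangle I'\seq S$ and $1\le|I\triangle I'|\le t$, the claim about $S$ gives both distinctness and the distance bound, whence $|\C|\ge\binom{|S|}{\floor{t/2}}\ge(2^{d-1}/t)^{\floor{t/2}}$ by a standard binomial estimate. In short: formulate the intermediate claim about a good seed set $S$ closed under small XORs, peel disjoint seed subsets (not pairs of $\floor{t/2}$-subsets) to prove it, and the code size falls out of a single binomial count rather than a product over rounds.
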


\begin{proof}
	Let $\nmExt : \{0,1\}^n \times \{0,1\}^d \to \{0,1\}$
	be a $(k,\eps)$-$t$-non-malleable extractor.
    Similarly to the proof of \cref{lemma:nmExt-to-code},
    we set $K = 2^k$, and let $X$ be an $(n,k)$-source, which we view as
    a collection of vectors $X =\{x_1,\dots,x_K\} \seq \{0,1\}^n$.
	For each seed $s \in \{0,1\}^d$, let $w^{(s)} \in \{0,1\}^K$
	be the $K$-bit \emph{evaluation vector},
	defined as $w^{(s)} = \big(\nmExt(x_i,s)\big)_{i \in \{1,\dots,K\}}$.
	Hereafter, all sums involving binary vectors are summations over $\mathsf{GF}(2)$.
	For $x \in \cc$, we denote by $\weight(x)$ the (absolute) Hamming weight of $x$.
	
	Whereas before, in the proof of \cref{lemma:nmExt-to-code}, we showed that
	the multi-set of evaluation vectors
	$\big\{w^{(s)} : s \in \{0,1\}^d \big\} \seq \{0,1\}^K$
	simply contains an error correcting code with good parameters,
	here we will derive our code by considering all $\mathsf{GF}(2)$-linear
	combinations of $\floor{t/2}$ elements
	of a carefully selected subset of the evaluation vectors.
	
	Towards that end, the next claim shows that there exists a large subset of seeds
	such that any linear combination of $t+1$ of the evaluation vectors
	that corresponds to these seeds has large Hamming weight.
	
    \begin{claim}\label{claim:t-nmExt-to-t-code}
        There is a subset $S \seq \{0,1\}^d$ of size $2^{d-1}$
        such that for every subset $I \seq S$ of size $|I| \leq t+1$ it holds that
        $\weight \left(\sum_{s \in I}w^{(s)} \right) \geq (0.5 - 2\eps)K$.
    \end{claim}

    \begin{proof}
        Assume towards contradiction that for every subset $S' \seq \{0,1\}^d$ of size at least $2^{d-1}$
        there are $t' \leq t+1$ distinct seeds $s_{1}\ldots,s_{t'} \in S'$ such that
        \begin{equation*}
        	\Pr_{x \in X} \left[\sum_{j=1}^{t'}\nmExt\left(x,s_{j}\right) = 0 \right] < 0.5 - 2\eps \enspace.	
        \end{equation*}
        We show below that this contradicts the assumption that $\nmExt$ is a $(k,\eps)$-$t$-non-malleable extractor.

        By our assumption, there is a subset of seeds $S_1\seq \{0,1\}^d$
        for which there exists $I_1\seq S_1$ of size $|I_1| = t'_1 \leq t$
        such that $\weight \left(\sum_{s \in I_1} w^{(s)} \right) < (0.5 - 2\eps)K$.
        We remove $I_1$ from $\{0,1\}^d$, and apply the assumption again
        to obtain $I_2 \seq \{0,1\}^d$ of size $|I_2| = t'_2 \leq t+1$ such that
        $\weight \left(\sum_{s \in I_2} w^{(s)} \right) < (0.5 - 2\eps)K$.
        We then remove $I_2$ from $\cc[d] \setminus I_1$, and apply the assumption again
        with respect to $\cc[d] \setminus (I_1 \cup I_2)$.
        By repeating this argument as long as $| \cup_j I_j| < 2^{d-1}$,
        we obtain $R$ disjoint subsets $I_1,\dots,I_R$,
        where the size of each $I_j$ is $t'_j \leq t+1$,
        such that $\sum_{j=1}^R |I_j| \geq 2^{d-1}$
        and
    	\begin{equation}
    	\label{eq:weight}
			\weight \left(\sum_{s \in I_j} w^{(s)} \right) < (0.5 - 2\eps)K \enspace,
    	\end{equation}
    	for all $j \in [R]$. Analogously to the proof of \cref{lemma:nmExt-to-code}, the set $I_1 \cup \ldots \cup I_T$ consists of the ``bad seeds'' that correspond to evaluation vectors whose $(t+1)$-element linear combinations are of low weight.

		To prove that the foregoing collection of ``bad seeds'' violates the
		assumption that $\nmExt$ is a $(k,\eps)$-$t$-non-malleable extractor,
		we exhibit $t$ adversarial functions $\A_1,\dots,\A_t \colon \cc[d] \to \cc[d]$
		(with no fixed points)
		for which there exists a function that distinguishes between the random variables
        consisting of the $(t+2)$-tuples
        \begin{center}
        	$\left(\nmExt(X,U_d),\Big(\nmExt\big(X,\A_\ell(U_d)\big)\Big)_{\ell \in [t]},U_d\right)$
        	and $\left(U_1,\Big(\nmExt\big(X,\A_\ell(U_d)\big)\Big)_{\ell \in [t]},U_d\right)$
        \end{center}
		with confidence $\eps$, where recall that
        $U_m$ denotes the random variable that is uniformly distributed over $\{0,1\}^m$.	
		
        We define the family $\{\A_\ell\}_{\ell\in[t]}$ in the natural way,
        by mapping each of the bad seeds to the set of seeds
        with which its linear combination is a low weight vector. That is,
        for each $j \in [R]$ let $I_j = \{s_1,\dots,s_{t'_j}\}$, where $t'_j \leq t+1$.
        Then, for all $\ell \in [t]$ we define
        \begin{equation*}
        	\A_\ell(s_i) =
        	\begin{cases}
        		s_{i+\ell \pmod{t'_j}}, & \text{for } s_i \in I_j,\: j\in[R] \\
        		\mbox{arbitrary}, & \text{for } s \in \cc[d] \setminus \left(\cup_{j\in[R]} I_j\right) \enspace.
        	\end{cases}
        \end{equation*}
        Note that by definition of the $\A_\ell$'s,
        for all $j \in [R]$ and $s \in I_j$ it holds that
        $\{s\} \cup \{\A_\ell(s)\}_{\ell \in [t'_j - 1]} = I_j$, and so, by \cref{eq:weight} we have that
        \[
            \Pr_{x \in X} \left[ \nmExt(x,s) = \sum_{i=1}^{t'_j-1}\nmExt\big(x,\A_i(s)\big) \right]
            = \frac{\weight \left(\sum_{s \in I_j} w^{(s)} \right)}{K} < (0.5 - 2\eps)K \enspace.
        \]
        Next, we define the distinguisher
        $\D \colon \zo \times \zo^t \times \cc[d] \to \{0,1\}$ as
        \begin{equation*}
            \D(b,b_1,\dots,b_t,s) = \begin{cases}
                          {\bf 1}_{b = \sum_{i \in [t'_j-1]}b_i}, & \mbox{if $s \in I_j$ for some $j \in [R]$}  \\
                          U_1, & \mbox{otherwise} \enspace.
                        \end{cases}
        \end{equation*}
        Clearly $\Pr\Big[\D\Big(U_1,\big(\nmExt(X,\A_\ell(U_d))\big)_{\ell \in [t]},U_d\Big) = 1\Big] = 0.5$.
        On the other hand, for $s$ sampled from $U_d$ we have
        \begin{flalign*}
            &\Pr\left[\D\Big(\nmExt(X,s),\big(\nmExt(X,\A_\ell(s))\big)_{\ell\in[t]}, s\Big) = 1\right]  \\
            &\geq (0.5 + 2\eps)\Pr \left[s \in \cup_{j \in [R]}I_j \right]
            + 0.5\Pr \left[ s \in \cc[d] \setminus \cup_{j \in [R]}I_j \right] \geq 0.5 + \eps
            \enspace,
        \end{flalign*}
        thus contradicting the assumption that $\nmExt$ is a $(k,\eps)$-$t$-non-malleable extractor.
        This concludes the proof of \cref{claim:t-nmExt-to-t-code}.
    \end{proof}
    Let $S \seq \cc[d]$ be the set guaranteed by \cref{claim:t-nmExt-to-t-code}, and consider the code
    \begin{equation*}
        \C \eqdef \left\{ \sum_{s \in I} w^{(s)} : I \seq S, |I| \leq \floor{t/2} \seq \cc[K]\right\}
        \enspace.
    \end{equation*}
    Note that for $D = 2^d$ we have $|\C| \geq {D/4 \choose \floor{t/2}} \geq (D/2t)^{\floor{t/2}}$.
    By the guarantee of \cref{claim:t-nmExt-to-t-code},
    for every distinct $x,y \in \C$ it holds that
    $\dist(x,y) \geq 0.5 - 2\eps$; that is
    $\C \seq \cc[K]$ is an error correcting code with relative distance $0.5 - 2\eps$,
    which completes the proof of \cref{lemma:t-nmExt-to-t-code}.
\end{proof}

We prove \cref{thm:t-nmExt lower bound} by applying the bound from \cref{thm:MRRW-quant} to the code obtained in \cref{lemma:t-nmExt-to-t-code}, analogously to the way we proved the theorem for the restricted case of $t=1$ before.

\begin{proof}[Proof of \cref{thm:t-nmExt lower bound} (general case)]
Since every $t$-non-malleable extractor is, in particular, a strong seeded extractor,
	then by \cref{thm:ext-LB} it holds that the seed length
	is $d > \log(n-k) + 2\log(1/\eps) - c$, as required.
	Furthermore, \cref{thm:ext-LB} also implies that $k \geq 2 \log(1/\eps) - c$.

    By \cref{lemma:t-nmExt-to-t-code},
	if $\nmExt : \{0,1\}^n \times \{0,1\}^d \to \{0,1\}$
	is a $(k,\eps)$-non-malleable extractor,
    then there exists an error correcting code $\C \seq \{0,1\}^{2^k}$
    with relative distance $0.5 - 2\eps$ such that $|\C| \geq (2^{d-1}/t)^{\floor{t/2}}$.

    We wish to apply \cref{thm:MRRW-quant} to the code $\C$.
    Recall that by the assumption it holds that $\eps < c_0$ and $c_0 < 1/2^c$,
    and observe that according to the bound on $k$ given by \cref{thm:ext-LB},
    we have that $2^k \geq \frac{2^{-c}}{\eps^2} > \frac{c_0}{\eps^2}$.
    Therefore, by applying \cref{thm:MRRW-quant},
    with respect to $c_0$ (recall that $c_0 < 1/20$) and $2\eps < c_0$,
    we get that
    \[(2^{d-1}/t)^{\floor{t/2}} \leq |\C| < 2^{\frac{23}{c_0}\cdot (2\eps)^2 \log(1/2\eps) 2^k} \enspace, \]
	and by the assumption that $\log(t) < d/2$ we get that
	\begin{equation*}
		\frac{23}{c_0} \cdot (2\eps)^2 \log(1/2\eps) 2^k \geq \big(d-2-\log(t)\big) \cdot \floor{t/2} \geq \Omega(d \cdot t) \enspace.
	\end{equation*}
    This implies that $k \geq \log(d) + \log(t) + 2\log(1/\eps) - \log\log(1/\eps) - O(1)$, as required.
\end{proof}

\section*{Acknowledgements}
We are thankful to Gil Cohen for helpful discussions.
We also thank Venkatesan Guruswami for a discussion regarding the MRRW bounds.

\bibliographystyle{alpha}
\bibliography{nmExtLB}

\end{document}